\newcommand{\defeq}{\triangleq}
\newcommand{\cat}[1]{{\normalfont\textsc{#1}}}
\newcommand{\drb}[1]{\llbracket #1 \rrbracket}
\newcommand{\dsb}[1]{\llparenthesis\, #1 \,\rrparenthesis}
\newcommand{\rlts}[2]{\frac{\displaystyle #1}{\displaystyle #2}}
\newcommand{\xnrightarrow}[1]{\xrightarrow{#1}\kern-3ex \not\kern+3ex}
\newcommand{\xnRightarrow}[1]{\xRightarrow{#1}\kern-3ex \not\kern+3ex}
\newcommand{\set}[1]{\{#1\}}
\newcommand{\grow}{\rhd}
\newcommand{\kcomp}{\mathop{,}}
\newcommand{\mext}{\textrm{\sf m}^\textrm{\sf ext}}
\newcommand{\mcis}{\textrm{\sf m}^\textrm{\sf cys}}
\newcommand{\pcarry}{\mathsf{p}^\mathsf{c}}
\newcommand{\pmemb}{\mathsf{p}^\mathsf{m}}
\newcommand{\pdir}{\mathsf{p}^\mathsf{d}}
\newcommand{\fcell}{\mathsf{f}^\mathsf{c}}
\newcommand{\fmemb}{\mathsf{f}^\mathsf{m}}
\newcommand{\fdir}{\mathsf{f}^\mathsf{d}}
\newcommand{\leftcompadj}{\mathbin{\circ\kern-0.35ex-}}
\newcommand{\rightcompadj}{\mathbin{-\kern-0.35ex\circ}}
\newcommand{\lefttensoradj}{\mathbin{\otimes\kern-0.35ex-}}
\newcommand{\righttensoradj}{\mathbin{-\kern-0.35ex\otimes}}
\newcommand{\sepupto}[1]{\stackrel{#1}{\otimes}}
\newcommand{\biosig}{\mathcal{K}_B}
\newcommand{\orivec}{\vec}
\newcommand{\xrightarrowtriangle}[1]
{\stackrel{#1}{{-\kern-1.05ex\rightarrowtriangle}}}
\newcommand{\Fp}{\cat{F}_{P}}
\newcommand{\Fm}{\cat{F}_{M}}
\newcommand{\rl}[2]{{\displaystyle\frac{#2}{#1}}}
\newcommand{\up}{\shortuparrow\kern-0.5ex}
\newcommand{\down}{\shortdownarrow\kern-0.5ex}
\tikzstyle{membranecolor}=[fill=yellow!50, draw=yellow!60!gray]
\tikzstyle{membranefill}=[fill=yellow!50]
\tikzstyle{membranedraw}=[draw=yellow!60!gray]
\newtheoremstyle{theorem}% ?name? 
{\topsep}% <Space above>
{\topsep}% <Space below>
{}% <Body font>
{}% <Indent amount>
{\bfseries}% <Theorem head font>
{.}% <Punctuation after theorem head>
{0.5em}% <Space after theorem head>
{\thmname{#1}\thmnumber{ #2}\thmnote{ (#3)}}% 
\newtheoremstyle{definition}% ?name? 
{}% <Space above>
{}% <Space below>
{}% <Body font>
{}% <Indent amount>
{\bfseries}% <Theorem head font>
{.}% <Punctuation after theorem head>
{0.5em}% <Space after theorem head>
{\thmname{#1}\thmnumber{ #2}\thmnote{ (#3)}}% 
\theoremstyle{theorem} %% style
\newtheorem{proposition}{Proposition}[section]
\newtheorem{theorem}{Theorem}[section]
\theoremstyle{definition}
\newtheorem{definition}{Definition}[section]
\theoremstyle{remark}
\title{Bigraphical models for protein and membrane interactions}
\author{Giorgio Bacci
\institute{University of Udine}
\email{giorgio.bacci@dimi.uniud.it}
\and
Davide Grohmann
\institute{University of Udine}
\email{grohmann@dimi.uniud.it}
\and
Marino Miculan
\institute{University of Udine}
\email{miculan@dimi.uniud.it}
}
\begin{document}

\maketitle

%\vspace{-1.5ex}

\begin{abstract}
  We present a \emph{bigraphical framework} suited for modeling
  biological systems both at protein level and at membrane level.  We
  characterize formally bigraphs corresponding to biologically
  meaningful systems, and bigraphic rewriting rules representing
  biologically admissible interactions.  At the protein level, these
  bigraphic reactive systems correspond exactly to systems of
  $\kappa$-calculus.  Membrane-level interactions are represented by
  just two general rules, whose application can be triggered by
  protein-level interactions in a well-defined and precise way.

  % On the other hand, we give a take advantage of the hierarchical
  % structures of bigraphs for dealing with compartment structure of
  % membranes;

  % We axiomatize the ``biologically meaningful'' link graphs, that
  % is, those which correspond exactly to protein solutions of the
  % $\kappa$-calculus.

  This framework can be used to compare and merge models at different
  abstraction levels; in particular, higher-level (e.g. mobility)
  activities can be given a formal biological justification in terms
  of low-level (i.e., protein) interactions.  As examples, we
  formalize in our framework the vesiculation and the phagocytosis
  processes.
% , showing
%   how membrane-level transitions are realised as (sequences of)
%   protein-level interactions.

  % Not all bigraphs resulting from a given signature have a
  % biological meaning; therefore, we use a ``sorting'' for ruling out
  % the ``innatural'' bigraphs.

  % Thus, link graphs represent solutions of protein complexes, while
  % place graphs represent compartments.

  % (i.e. the link graph component of the bigraph) represent
  % complexes, thus a link graph represent a solutions.  , and the
  % tree structure place graphs for representing membrane
  % compartments.

%\vspace{-2ex}
\end{abstract}

% \begin{keyword}
%   Bigraphs, systems biology, models of concurrent systems,
%   $\kappa$-calculus, Brane calculus.
% \end{keyword}

%\vspace{-2ex}

\section{Introduction}

% \vspace{-1ex}

%  There have been lot of interest in system biology.

% 1. l'uso dei calcoli in systems biology .AŽè ormai assodato
% 2. molti calcoli sono stati presentati a diversi livelli di astrazione.
% Abstract Machines di Cardelli dŽà una visione generale dei vari chemical toolkit
% lower level: Protein Machine (aka Biochemical networks)

%Nowadays, are
%fruitfully used for compositional and modular modeling of biological
%systems.  

\looseness=-1
Cardelli in \cite{cardelli05:amsb} has convincingly argued that the
various \emph{biochemical toolkits} identified by biologists can be
described as a hierarchy of \emph{abstract machines},
% with appropriate sets of states and operations,
each of which can be modelled using methods and techniques from
concurrency theory. These machines % operate in concert and
are highly interdependent: ``to understand the functioning of a cell,
one must understand also how the various machines interact''
\cite{cardelli05:amsb}.  Like other complex situations, it seems
unlikely to find a single notation covering all aspects of a whole
organism.  In fact, we are in presence of a \emph{tower of models}
\cite{milner09:tower}, each focusing on specific aspects of the
biological system, at different levels of abstractions. Higher-level
models must be represented, or \emph{realised}, at a lower level, and
where possible this representation must be proved sound; in addition,
we need to combine different models at the same level.
To this end, we need a general \emph{meta}model, that is, a
\emph{framework}, where these models (possibly at different
abstraction levels) can be encoded, and their interactions can be
formally described.

% and eventually ``we will need a single notation in which to describe
% all machines, so that a whole organism can be described''
% \cite{cardelli05:amsb}.

In this paper, we substantiate Milner's idea that \emph{bigraphs} can
be successfully used as a framework for systems biology.  More
precisely, we define a class of \emph{biological bigraphs}, and
\emph{biological bigraphical reactive systems (BioRS)}, for dealing
with both protein-level and membrane-level interactions.

An important design choice is that this framework has to be
\emph{biologically sound}, i.e., it must admit only systems and
reactions which are biologically meaningful, especially at lower level
machines (i.e. protein).  In this way, encoding a given model,
for any abstract machine, as a BioRS provides automatically a
formal, biologically sound justification for the model (or
``implementation'') in terms of protein reactions and explains how its
membrane-level interactions are realised by protein machinery.

%\looseness=-1
In order to formalize this ``biological soundness'', we need a
\emph{formal} protein model to compare to our framework. We choose
Danos and Laneve's \emph{$\kappa$-calculus}, one of the most accepted
formal model of protein systems.  By suitable sorting conditions, we
define a bigraphical framework which allows \emph{all} and \emph{only}
%biologically meaningful
protein configurations and interactions of the $\kappa$-calculus.  It
is important to notice, however, that our methodology is general, and
can be applied to other formal protein models.

On the other hand, membrane nesting reconfiguration can be performed
by just only two general rules, corresponding to the natural phenomena
of ``pinch'' and ``fuse'' \cite{cardelli08:tcs}.  For encoding a given
membrane model one has just to refine this general schema by
specifying \emph{when} these reactions are triggered, that is, when
the right proteins are in place.  Indeed, as observed by biologists
(see \cite{au:phago} and \cite[Ch.~15]{alberts:ecb}), membrane
interactions present always a ``preparation phase'', where membrane
proteins (receptors and ligands) interact, followed by the actual
``membrane reconfiguration'' phase.  The ``pinch'' and ``fuse'' rules
model the latter phase, whilst the preparation phase depends on the
specific proteins involved and hence left to the encoding of the
specific model under examination.

%\looseness=-1 %
Another important consequence of encoding a model in our framework is
that any ``too abstract'', or non-realistic, aspect of the model is
readily identified, because its formalization turns out to be
problematic, or even impossible. In this case, one has to change the
model, or the encoding, in order to be biologically sound. As a
result, the bigraphical encoding of a given system may exhibit
unexpected features, not present (or unpredicted) in the original
model.  Far from being a problem, this allows to foresee
\emph{emerging properties}, such as behaviours due to interactions of
different abstraction levels, and which cannot be observed within a
single machine model due to its intrinsic abstractions.
% (like, e.g., side-effects of drugs).

A further motivation for our framework comes from the many general
results provided by bigraphs. We mention here only the construction of
compositional bisimilarities \cite{milner:ic06}, allowing to prove
that two systems are \emph{observational equivalent}, that is, they
can be exchanged in any organism without that the overall behaviour
will change. Also for this application it is important to restrict the
bigraphs allowed by the framework to only those biologically
meaningful.

% Moreover, a prototype (albeit not efficient) simulation
% tool for a given calculus could be obtained by instantiating a general
% implementation of (sorted) bigraphical reactive systems. 

We summarise briefly our approach.  
In Section~\ref{sec:plg} we first define \emph{protein link graphs}
and corresponding reactive systems, which correspond precisely to
protein solutions and protein transition systems of the
$\kappa$-calculus (recalled in Section~\ref{sec:kappa}).
%
% Basically, atomic nodes of bigraphs represent proteins, links
% represent chemical bonds, and the place graph represents the nesting
% of membranes.  However, not all bigraphs resulting from a given
% ``protein signature'' have a biological meaning, and not all possible
% reaction rules between (biologically sound) systems are biologically
% admissible.  This hinders 
%
In Section~\ref{sec:biobig} we extend this approach to deal with
compartments, introducing \emph{biobigraphs}.  Membranes are
represented by two new nodes, which play no role at the protein level,
but which %, differently from protein nodes,
can contain other systems.  Appropriate sorting conditions will
enforce biological properties such as bitonality and orientation of
membranes.
Two applications of this frameworks are given in
Section~\ref{sec:examples}: a formal description of vesicle formation
process, and a formalization of the Fc receptor-mediated phagocytosis.
In both cases, the framework obliges to provide a formal justification
of (membrane mobility) reactions in terms of protein-level
interactions.
Conclusions are in Section~\ref{sec:concl}.

\section{Preliminaries}\label{sec:prelim}
% \vspace{-1ex}

\subsection{Bigraphical reactive systems}\label{sec:bigraphs}
In this section we recall the bigraphical framework
\cite{milner:ic06}, extending the variant of \cite{bs:ppdp06} with
typed names.

\smallskip

\noindent\textbf{Bigraphs}
A bigraph represents an open system, so it has an inner and an outer
interface to ``interact'' with subsystems and the surrounding
environment.  The \emph{numbers} in the interfaces describe the
\emph{roots} in the outer interface (that is, the various locations
where the nodes live) and the \emph{sites} in the inner interface
(that is, the grey holes where other bigraphs can be inserted). On the
other hand, the \emph{names} in the interfaces describes the open
links, that is end points where links from the outside world
can be pasted, creating new links among nodes.
An example of a bigraph is shown in Fig.~\ref{fig:bigex}.
\begin{figure}[t]
  \centering
  \includegraphics[scale=0.5]{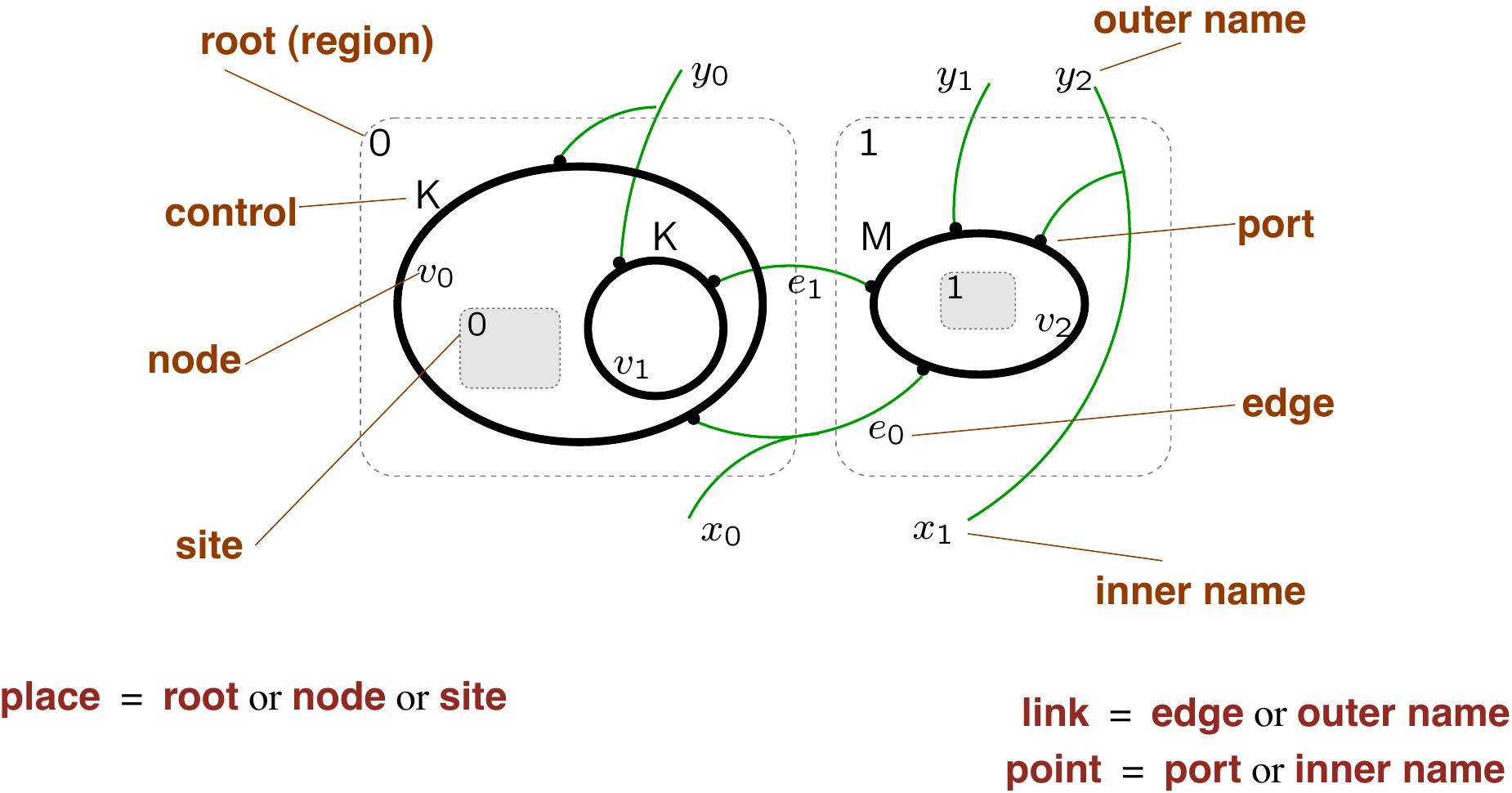}
  \caption{Example of a bigraph and its structure (picture from
    Milner's slide).}
  \label{fig:bigex}
\end{figure}

A signature is a quadruple $\langle \mathcal{K}, ar, \mathcal{T},
\mathcal{E} \rangle$ where $\mathcal{K}$ is a set of \emph{node
  controls} with an arity function \mbox{$ar\colon \mathcal{K} \to
  \mathbb{N}$}, $\mathcal{T}$ is a poset of \emph{types} with ordering
$\sqsubseteq$, and $\mathcal{E} \subseteq \mathcal{T}$ denotes the
\emph{edge types}.

\begin{definition}[Interfaces and Bigraphs]\label{def:bigraphs}
  An \emph{interface} is a pair $\langle n, X \rangle$, where $n$ is a
  finite ordinal (called \emph{width}) and $X$ is a set of typed names
  $x{:\,}\mathsf{t}$, where $\mathsf{t} \in \mathcal{T}$.

  A \emph{bigraph} $G$ is composed by a \emph{place graph} $G^P$,
  describing the nesting of nodes, and a \emph{link graph} $G^L$,
  describing the (hyper-)links among nodes.
  \begin{align}
    G^P & = (V, ctrl, prnt) \colon m \to n
    \tag{Place graph} \\
    G^L & = (V, E, ctrl, edge, link) \colon X \to Y
    \tag{Link graph} \\
    G & = (G^P,G^L) = (V, E, ctrl, edge, prnt, link) \colon
    \langle m,X\rangle \to\langle n,Y\rangle
    \tag{Bigraph}
  \end{align}
  where $V,E$ are the sets of nodes and edges respectively,
  $ctrl\colon V \to \mathcal{K}$ is the \emph{control map}, which
  assigns a control to each node, $edge\colon E \to \mathcal{E}
  (\subseteq \mathcal{T})$ is the \emph{edge typing} which assigns a
  type to each edge, $prnt\colon m\uplus V \to V\uplus n$ is the
  (acyclic) \emph{parent map}, the disjoint sum $Prt = \sum_{v \in V}
  ar(ctrl(v))$ is the set of ports (associated to all nodes), and
  $link\colon X\uplus Prt \to E\uplus Y$ is the (type consistent)
  \emph{link map}, i.e., for all $x{:\,}\mathsf{t} \in X$, if
  $link(x{:\,}\mathsf{t})= l{:}\mathsf{t}'$ then
  $\mathsf{t}\sqsubseteq \mathsf{t}'$.
\end{definition}

In the following, we say \emph{place} to mean a node, a root or a
site, and we call the elements in the domain of the link map
\emph{points}, while the ones in the codomain \emph{links}.  An
\emph{idle} name is not a target of any point.  We denote by $(v,i)$
the $i$th port of the node $v$.  Two nodes are \emph{peer} if two of
their port (one per node) have the same target link.  Often, we say
control to mean a node with associate that control, and we write $x \in X$ 
in place of $x{:\,}\mathsf{t} \in X$ when the type of the name it is clear
from the context.

\begin{definition}[Bigraph category]
  Given a signature $\langle \mathcal{K}, ar, \mathcal{T}, \mathcal{E}
  \rangle$, the category of (link typed) bigraphs
  $\cat{Big}(\mathcal{K},\mathcal{T}, \mathcal{E})$ has interfaces as
  objects and bigraphs as morphisms.
  
  Given two bigraphs $G\colon\langle m,X\rangle \to \langle
  n,Y\rangle$, $H\colon\langle n,Y\rangle \to \langle k,Z\rangle$, the
  composition $H\circ G\colon \langle m,X\rangle \to \langle
  k,Z\rangle$ is defined by composing their place and link graphs:
  \begin{description}[font=\bfseries, labelindent=\parindent,
    leftmargin=\parindent]
  \item[Place:] $H^P \circ G^P = (V, ctrl, (id_{V_G}\cup prnt_H) \circ
    (prnt_G\cup id_{V_H}))\colon m \to k$;
  \item[Link:] $H^L\circ G^L= (V, E, ctrl, edge, (id_{E_G}\cup link_H)
    \circ (link_G\cup id_{P_H}))\colon X \to Z$,
  \end{description}
  where $V = V_{G} \cup V_{H}$, $E = E_{G} \cup E_{H}$ (supposing $V_G
  \cap V_H = E_G \cap E_H = \emptyset$ otherwise a renaming is
  applied), $ctrl = ctrl_{G} \cup ctrl_{H}$ and $edge = edge_{G} \cup
  edge_{H}$.

  $\cat{Plg}$ and $\cat{Lnk}$ will denote the categories of place and
  link graphs, respectively.
\end{definition}
An example of splitting of a bigraph in its components is shown in
Figure~\ref{fig:bigsplit}.
\begin{figure}[t]
  \centering
  \includegraphics[scale=0.4]{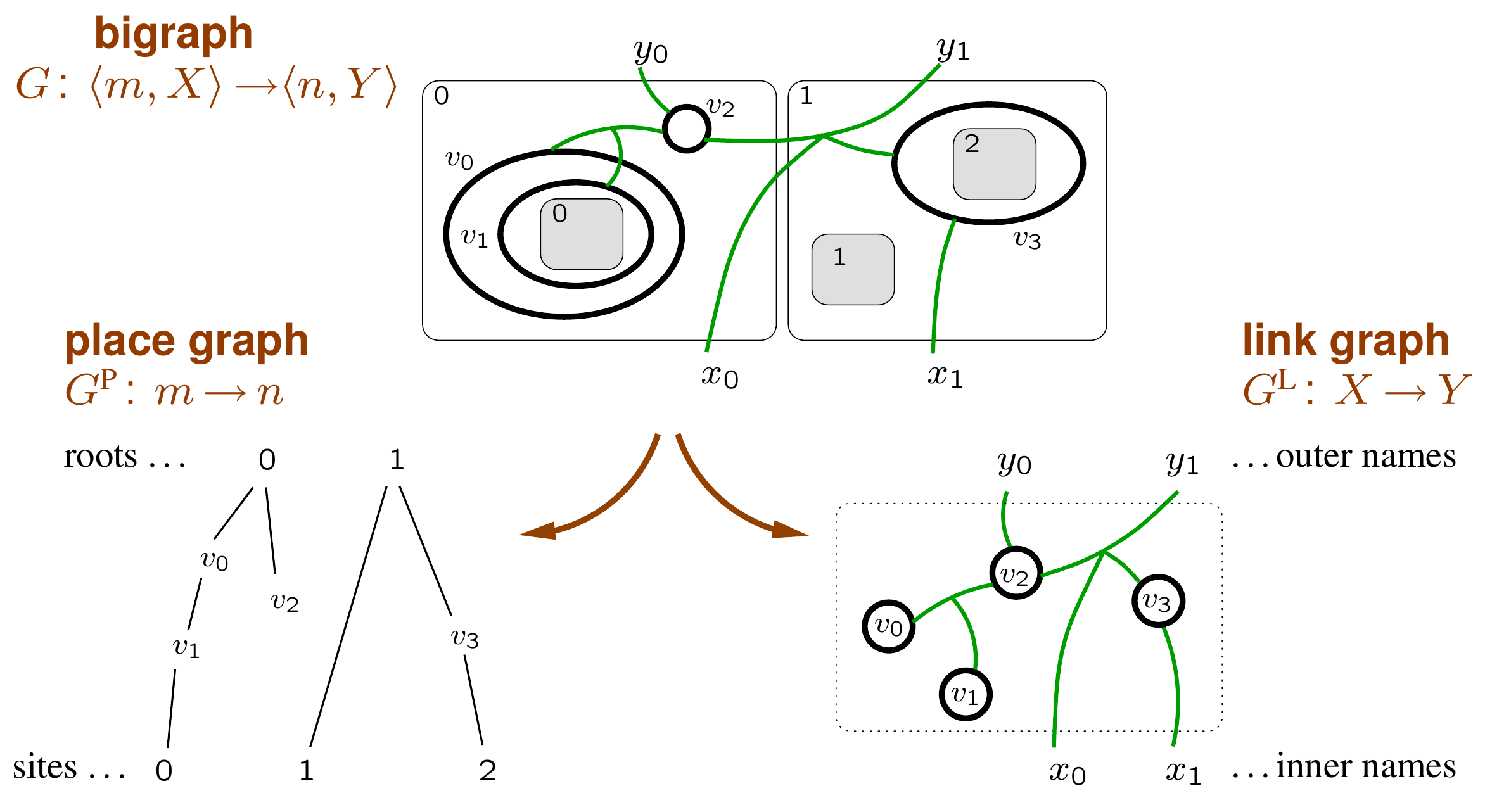}
  \caption{Splitting of a bigraph (picture from \cite{milner:ic06}),
    where $m = 2$, $n = 1$, $X = \set{x_{0}, x_{1}}$, $Y = \set{y_{0},
      y_{1}}$.}
  \label{fig:bigsplit}
\end{figure}

Intuitively, composition is performed in two steps. First, the place
graph are composed by putting the roots of the ``lower'' bigraph
inside the sites (i.e., holes) of the ``upper'' one, respecting the
order given by the ordinal in the interface; then, the links are
connected by sticking together the end parts of connections in the two
link graphs, labelled with the same typed name in the common
interface.

An important operation on bigraphs is the \emph{tensor product}.
Intuitively, the tensor product of $G:\langle m,X\rangle \to \langle
n,Y\rangle$ and $H:\langle m',X'\rangle \to \langle n',Y'\rangle$, 
defined when $X\cap X' = Y\cap Y' =\emptyset$, is
the bigraph $G\otimes H : \langle m+m',X\uplus X'\rangle \to \langle
n+n',Y\uplus Y'\rangle$ and is obtained by putting ``side by side'' $G$ and $H$.
%without merging any root nor any name in the interfaces.  
Two useful variant of tensor product can be defined using tensor and
composition: the \emph{parallel product} $\parallel$, which merges
shared names between two bigraphs, and the \emph{prime product}
$\mid$, that moreover merges all roots in a single one.  We refer to
\cite{milner:ic06} for formal definitions.

% \begin{definition}
%   The \emph{tensor product} $\otimes$ is defined as follows. On finite
%   ordinals: $m \otimes n \defeq m + n$. On name sets: $X\otimes Y
%   \defeq X\uplus Y$, and it is defined when $X\cap Y=\emptyset$. On
%   interfaces: $\langle m, X\rangle \otimes \langle n, Y\rangle \defeq
%   \langle m\otimes n, X\otimes Y\rangle$, it is defined when $X\otimes
%   Y$ exists.

%   As composition, given two bigraphs $G\colon\langle m,X\rangle \to
%   \langle n,Y\rangle$ and $H\colon\langle m',X'\rangle \to \langle
%   n',Y'\rangle$, their tensor product $G\otimes H\colon \langle
%   m,X\rangle \otimes \langle m',X'\rangle \to \langle n,Y\rangle
%   \otimes \langle n',Y'\rangle$ can be defined using the tensor of
%   their parts (if the interfaces are defined).
%   \begin{enumerate}
%   \item the tensor of $G^P\colon m\to n$ and $H^P\colon m'\to n'$ is defined when
%     $V_G\cap V_h = \emptyset$ and it is $G^P\otimes H^P= (V_G\uplus
%     V_H, ctrl_G\uplus ctrl_H, prnt_G\otimes prnt_H)\colon m\otimes m' \to
%     n\otimes n'$;
%   \item the tensor of $G^L\colon X \to Y$ and $H^L\colon X'\to Y'$ is defined when
%     $V_G \cap V_H = \emptyset$ and $E_G \cap E_H = \emptyset$, and it
%     is $G^L\otimes H^L= (V_G\uplus V_H, E_G\uplus E_H,ctrl_G\uplus
%     ctrl_H, link_G\otimes link_H)\colon X\otimes X' \to Y\otimes Y'$.
%   \end{enumerate}
% \end{definition}

\smallskip

\noindent\textbf{Bigraphical reactive systems} (BRSs) are reactive
systems in the sense of \cite{lm:concur00}, built over the category
$\cat{Big}(\mathcal{K},\mathcal{T}, \mathcal{E})$
\cite{milner:ic06,bs:ppdp06}. A BRS consists of a set of bigraphical
rewriting rules, together with a definition of the evaluation contexts
(i.e., bigraphs with holes) where rules redexes can be found in order
to be rewritten.  Such contexts are defined as a sub-category
$\mathcal{A}$ of \emph{active context} inside
$\cat{Big}(\mathcal{K},\mathcal{T}, \mathcal{E})$.

\begin{definition}[Activity]
  Each control of a signature is either \emph{active} or \emph{passive} 
  or \emph{atomic}. An atomic control cannot contain any node (hence
  it is a leaf on the place graph).
  A location (i.e., an element in the domain of $prnt$) is
  \emph{active} if all its ancestors have active controls (roots
  are always active); otherwise it is \emph{passive}.
  A context is \emph{active} if all its holes are active locations.
\end{definition}
It is easy to check that active contexts form a compositional
reflective sub-category of bigraphs, which we denote by $\mathcal{A}$,
the category of active contexts.

\begin{definition}[BRS]
  A \emph{bigraphical reactive system}
  $\mathcal{D}(\mathcal{K},\mathcal{T},\mathcal{E},\mathcal{R})$ is
  formed by $\cat{Big}(\mathcal{K},\mathcal{T},\mathcal{E})$ equipped
  with the subcategory $\cat{Act}$ of active contexts, and a set
  $\mathcal{R}$ of (parametric) \emph{reaction rules}, that is pairs
  $L,R: \langle m,X\rangle \to \langle n,Y\rangle$ (usually written as
  $L\rightarrowtriangle R$).
  The \emph{reaction relation} $\rightarrowtriangle$ defined by a BRS
  is the relation between ground bigraphs given by the following
  rule\footnote{Milner's definition of BRS allows for non-linear
    instantiations, but for the scope of this paper linear rules are
    sufficient.}:
% \vspace{-1ex}
  \begin{equation*}
    \frac{C \in \cat{Act} \qquad L \rightarrowtriangle R \in \mathcal{R}
      \qquad d \text{ ground} \qquad
      G = C \circ L\circ d \qquad H = C \circ R\circ d }
    {G \rightarrowtriangle H}
% \vspace{-1ex}
  \end{equation*}
\end{definition}

Clearly, these definitions can be applied to place and link graphs
too.  Active contexts for place graphs are as for bigraphs and the
rules can have only holes; instead for link graphs, all contexts are
active.
% , i.e., $\mathcal{A}=\cat{Lnk}$. and the rules can have only inner
% names.

\smallskip

% \looseness=-1
\noindent\textbf{Sortings}
Usually, bigraphs defined over a given signature are ``too many''.  A
systematic way for ruling out unwanted bigraphs is by means of a
\emph{sorting} \cite{bdh:concur08}, that is, a functor $F\colon
\cat{S} \to \cat{Big}$, \emph{faithful} and \emph{surjective on
  objects}, defined on a category $\cat{S}$ where unwanted morphisms
from $\cat{Big}$ are deleted.  A sorting can be conveniently defined
by means of a suitable logical condition specifying the class of
morphisms we want to restrict to \cite{bdh:concur08}.  To ease
readability, in the paper we will present these conditions in a
semi-formal logical language; a formal description of all
these conditions can be given as BiLog formulas \cite{cms:icalp05}%
\begin{extended} 
(see Appendix~\ref{sec:formulae})
\end{extended}.
%See Appendix \ref{sec:bilog} and for details.
Notice that also the name typing introduced in
Definition~\ref{def:bigraphs} can be expressed by means of a sorting.

%  A predicate $P$
% on morphisms is decomposable if $P(f \circ g)$ implies $P(f)$ and
% $P(g)$.

% \vspace{-1ex}

\subsection{The $\kappa$-calculus}\label{sec:kappa}
Here we recall Danos and Laneve's $\kappa$-calculus~\cite{dl:kappa04}, 
a calculus for protein interactions, which will be our reference protein-level 
model in Section~\ref{sec:plg}.

Let $\mathcal{P}$ be a finite set of proteins and $\mathcal{N}$ an
enumerable set of edge names. Let $s\colon \mathcal{P} \to
\mathbb{N}$ be a signature, that assigns to every protein the number
of its \emph{domain sites}.
A $\kappa$-interface is a map from $\mathbb{N}$ to $\mathcal{N} \uplus
\set{h,v}$ (ranged over $\rho$, $\sigma$, $\ldots$). Given an
interface $\rho$ and a protein name $A \in \mathcal{P}$, 
a site $(A,i)$ is \emph{visible} if $\rho(i) = v$,
\emph{hidden} if $\rho(i) = h$, and \emph{tied} if $\rho(i)\in
\mathcal{N}$. A site is \emph{free} if it is visible or hidden.  In
the following, we will write $\rho = 1 + \bar{2} + 3^x$ to mean
$\rho(1) = v$, $\rho(2) = h$, and $\rho(3) = x$.
The syntax of $\kappa$-solutions is the following:
% \vspace{-1ex}
\begin{equation*}
  S,T ::= \mathbf{0} \mid A(\rho) \mid S \kcomp T \mid (x)(S)
% \vspace{-1ex}
\end{equation*}
up-to $\alpha$-equivalence and the least structural equivalence
($\equiv$), satisfying the abelian monoid on composition and the scope
extension and extrusion laws. The sets of free edge names $fn(\rho)$ and
$fn(S)$, on interfaces and solutions, are defined as usual.

\begin{definition}
  The set of \emph{connected $\kappa$-solutions} is defined
  inductively as:
  \begin{equation*}
    \rlts{}{A(\rho)} \qquad
    \rlts{S}{(x) S} \qquad 
    \rlts{S\quad T\quad fn(S)\cap fn(T)\neq\emptyset}{S \kcomp T} \qquad
    \rlts{S\quad S\equiv T}{T}
  \end{equation*}
\end{definition}

\begin{definition}
  A $\kappa$-solution $S$ is \emph{(strong) graph-like} if free names
  occur at most (exactly) two times in $S$, and binders tie either 0
  or 2 occurrences.

  A \emph{complex} is a closed, graph-like, and connected solution.
\end{definition}

Biochemical reactions are either complexations (i.e.~where low energy
bonds are formed on two complementary sites) or decomplexations,
possibly co-occur\-ring with (de)activation of sites.  \emph{Causality}
does not allow simultaneous complexations \emph{and} decomplexations
on the same site: this biological constraint is assured in the
$\kappa$-calculus by the notion of \emph{(anti-)monotonicity}, which
forces reactions not to decrease (increase) the level of connection of
a solution (see \cite{dl:kappa04} for details).

\begin{definition}[Growing relation]
  Let $\tilde{x}$ be a set of fresh names.

  A growing relation $\grow$ over $\kappa$-interfaces is defined
  inductively as follows:
  \begin{equation*}
    \rlts{}{\tilde{x} \vdash \bar{i} \grow i} \quad \;
    \rlts{}{\tilde{x} \vdash i \grow \bar{i}} \quad \;
    \rlts{x\in \tilde{x}}{\tilde{x} \vdash i \grow i^x} \quad \;
    \rlts{\tilde{x}\cap fn(\rho)=\emptyset}{\tilde{x}\vdash\rho\grow\rho}
    \quad \;
    \rlts{\tilde{x}\vdash\rho\grow\sigma\quad\tilde{x}\vdash\rho'\grow\sigma'
    }{\tilde{x}\vdash \rho+\rho'\grow\sigma+\sigma'}
  \end{equation*}

  A growing relation $\grow$ over $\kappa$-solutions is defined
  inductively as follows:
  \begin{align*}
    \rlts{}{\tilde{x} \vdash \mathbf{0}\grow\mathbf{0}} \qquad
    \rlts{\tilde{x}\vdash S\grow T\quad\tilde{x}\vdash\rho\grow\sigma
    }{\tilde{x}\vdash S\kcomp A(\rho)\grow T\kcomp A(\sigma)} \qquad
    \rlts{\tilde{x}\vdash S\grow T\quad fn(\sigma)\subseteq\tilde{x}
    }{\tilde{x}\vdash S \grow T\kcomp A(\sigma)}
  \end{align*}
\end{definition}

\begin{definition}[Monotone reactions]
  Let $L,R$ be two $\kappa$-solutions. $L \rightarrow (\tilde{x})R$ is
  a \emph{monotone reaction} if $\tilde{x}\vdash L\grow R$, $L$ and
  $(\tilde{x})R$ are graph-like and $R$ is connected.  $L \rightarrow
  (\tilde{x})R$ is an \emph{antimonotone reaction} if $R \rightarrow
  (\tilde{x})L$ is monotone.
\end{definition}

Finally, we are able to characterize a protein transition system:
\begin{definition}[Protein transition systems]
  Given a set $\mathcal{R}_\kappa$ of monotone and antimonotone
  reactions, we define a \emph{protein transition system} (PTS) as a
  pair $\langle \mathcal{S},\leadsto\rangle$, such that $\mathcal{S}$
  is a set of $\kappa$-solutions, and $\leadsto$ is the least binary
  relation over $\mathcal{S}$ such that $\mathcal{R}_\kappa\subseteq
  {\leadsto}$, closed w.r.t.~$\equiv$, composition and name
  restriction.
\end{definition}

% \vspace{-1ex}

\section{Protein link graphs}\label{sec:plg}
% \vspace{-1ex}
In this section we introduce a general graphical formalism for
modelling biological protein interactions. Proteins are represented
as atomic nodes where links stand for chemical bonds. 
In such a model locations are not needed, hence we do not make
use of the full bigraphical framework and link graphs are enough.

%This category will be 
%characterized by a suitable sorting, forcing some conditions on edge types.
%(Later, we will introduce localities when we will deal with membranes.) 

A protein changes its behaviour depending on its folding structure
which determines its current interacting interface (the set of its
domain sites).  This structure could change after a complexation
reaction.  To describe the interface of each protein we use four linking
types, $\mathsf{v}$ (visible), $\mathsf{h}$ (hidden), $\mathsf{b}$ (bond)
and $\mathsf{f}$ (free), indicating the status of the corresponding protein 
site which they are connected to.

\begin{definition}[Protein signature]
  Let $\mathcal{P}$ be a set of protein controls, a \emph{protein
    signature} $\langle \mathcal{P},ar \rangle$ is defined as $\langle
  \mathcal{P},ar,\mathcal{T}_{p}, \mathcal{E}_{p}\rangle$, where
  $\mathcal{T}_{p} =
  \set{\mathsf{h},\mathsf{v},\mathsf{b},\mathsf{f}}$ with
  $\sqsubseteq_{p} = id_{\mathcal{T}_{p}} \cup \set{(\mathsf{f},t) : t
    \in \mathcal{T}_{p}}$, and $\mathcal{E} =
  \set{\mathsf{h},\mathsf{v},\mathsf{b}}$.
\end{definition}

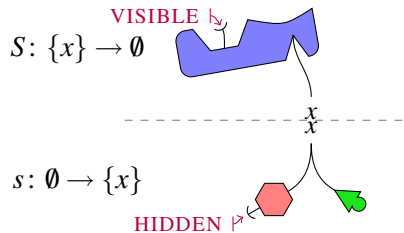
\begin{figure}[t]
  \centering
  \begin{tikzpicture}[scale=0.62]%% wrongcomp.tex

% \draw[help lines, step=0.5] (-6,-4) grid (6,4);
    
    \draw (0,0) coordinate (center)
    +(up:2) coordinate (centerS)
    +(down:3) coordinate (centerS');
    
    %% soluzione S
    \draw (centerS) coordinate (inactive);
    \begin{scope}[rotate around={180:(inactive)}]
      \draw (inactive)++(-2,0) coordinate (perno);
      \draw[fill=blue!50] (inactive)++(up:0.1)
      [rotate around={10:(perno)}]
      \eoneprotein{e11}{e12}{e13}
      (e11)++(down:0.5) coordinate (close1);
    \end{scope}
    %% interfaccia interna
    \draw (centerS)
    +(2,-2) node[name=x1,below] {\small $x$};
    %% link map
    \draw (x1) to[out=90,in=-90] (e12);
    \draw[ -( ] (e11) to[out=90,in=-60] (close1);
    
    %% nome del bigrafo
    \draw (e11)++(left:3.15) node {$S\colon \set{x} \rightarrow \emptyset$};
    
    %% soluzione S'
%    \foreach \name / \posx / \posy / \angle in {adp2/2.5/1/-120} {
%      \draw[fill=green!80!gray] (centerS')++(\posx,\posy) coordinate (perno)
%      [rotate around={\angle:(perno)}]
%      \adp{\name};
%    }
    
    \draw[fill=green!80!gray] ($(centerS') + (2.5,1)$)
    	 [rotate around={-120:+(0,0)}]  \adp{adp1};
     \draw[fill=red!50] ($(centerS') + (1.2,0.8)$)
    	 [rotate around={120:+(0,0)}]  \mprotein{mprot}{adp2};
    
    %% interfaccia esterna
    \draw (centerS')
    +(2,2) node[name=x2,above] {\small $x$};
    % link map
    \draw (adp1) to[out=150,in=-90] (x2)
    (adp2) to[out=30,in=-90] (x2);
    \draw[-)] (mprot) to ($(mprot)!1!30:+(left:0.3)$) coordinate (close2);
    
    %% nome del bigrafo
    \draw (centerS')++(-3,1.2) node {$s\colon \emptyset \to \set{x}$};
    
    %% etichette dei controlli
%     \draw (adp1)++(-0.4,-0.8) node {$\mathsf{ADP}$}
%     (adp2)++(0.4,-0.8) node {$\mathsf{ADP}$}
%     (centerS)++(2.7,0.75) node {$\mathsf{E1_{act}}$}
%     (centerS)++(2.7,-0.75) node {$\mathsf{E2_{act}}$};
    
    %% linea divisoria
    \draw[dashed, gray] (center)++(down:0.5)
    +(left:2) -- +(right:4);

    %% etichette dei tappi
    \draw ($(e11) + (-1.5,0.7)$) node (hidden) {\small \color{purple}\textsc{visible}};
    \draw[|->, thin, purple] (hidden) to[out=0, in=120] (close1);
    
    \draw ($(close2) + (-1.5,-0.2)$) node (visible) {\small \color{purple}\textsc{hidden}};
    \draw[|->, thin, purple] (visible) to[out=0, in=210] ($(close2) + (-0.1,-0.05)$);
    
    \end{tikzpicture}
  \caption{An example of a wrong composition.}
  \label{fig:wrongcomp}
\end{figure}

We say that a site is \emph{visible}, \emph{hidden}, \emph{bond} or
\emph{free} if it is connected to an link of type $\mathsf{v}$,
$\mathsf{h}$, $\mathsf{b}$ or $\mathsf{f}$, respectively. In
principle, hyperlinks can connect more than two ports, yielding
biological meaningless link graphs because a bond involves at most two
sites. Clearly, we want this condition to be preserved under
composition (it is always preserved by tensor).  Actually, two good
protein solutions could be composed yielding an incorrect solution, as
in the Figure~\ref{fig:wrongcomp}.

Well-formed link graphs can be defined as follows:

\begin{definition}[Protein solution]
  A \emph{protein solution} $S\colon X\to Y$ is a link graph over a
  protein signature $\mathcal{P}$, such that every link has at most
  two peers, and every link of type $\mathsf{v}$ or $\mathsf{h}$ has
  exactly one peer:
  \begin{equation}\label{eq:protsol}
    \forall l:t.\; t \in \set{\mathsf{b},\mathsf{f}} \Rightarrow
    |link^{-1}(l:t)| \in \set{1,2}
    \quad \text{and} \quad
    \forall l:t.\; t \in \set{\mathsf{h},\mathsf{v}} \Rightarrow
    |link^{-1}(l:t)| = 1
    \tag{\textsc{ProtSol}}
  \end{equation}
\end{definition}
\noindent

\begin{definition}[Protein link graphs]
  The category of \emph{protein link graphs} $\cat{Lgp}$ is the
  category of link graphs sorted on the predicate \ref{eq:protsol}.
\end{definition}

\looseness=-1
Let us now consider possible reaction rules on link graphs. As for
$\kappa$-calculus, reactions allow complexations only between two
visible sites, hence tied sites must be freed before being involved in
any protein interaction.  Using the same approach of
\cite{dl:kappa04}, protein reactions can either increase or decrease
the level of connection of protein complexes; they can introduce new
nodes (protein synthesis) or remove nodes (protein
degradation). Formally, we define a \emph{growing relation}:
\begin{definition}[Growing relation]
  The \emph{growing relation $\grow$} over pairs $(\omega, S)$ 
  of wirings $\omega \colon Z \to Y$ and discrete solutions $S\colon X \to Z$ 
  is defined inductively as:
  \begin{gather*}
    \rl{ id_{\emptyset}, id_{\emptyset} \grow id_{\emptyset}, id_{\emptyset} }{  }
    \;
    \textsf{(empty)}
    \qquad\qquad
    \rl{\omega \otimes \mu, S \otimes T \grow \omega' \otimes \mu, S' \otimes T}
    { \omega, S \grow \omega', S' } \;\textsf{(tens)} 
    \\[1ex]
    \rl{/z{:\,}\mathsf{v} \otimes \omega, S \grow /z{:\,}\mathsf{h} \otimes
      \omega, S}
    {  } \;
    \textsf{(hide)}
    \qquad\quad
    \rl{ /z{:\,}\mathsf{h} \otimes \omega, S \grow /z{:\,}\mathsf{v} \otimes
      \omega, S}
    {  } \;
    \textsf{(reveal)}
    \\[1ex]
    \rl{ /z_{1}{:\,}\mathsf{v} \otimes /z_{2}{:\,}\mathsf{v} \otimes \omega,
      S \grow  (/z{:\,}\mathsf{h} \circ z/\set{z_{1},z_{2}}) \otimes \omega, S}
    {  } \;
    \textsf{(tie)}
    \\[1ex]
    \rl{/\orivec z{:\,}\mathsf{v} \otimes \omega, S \grow
      (/\orivec z{:\,}\mathsf{h}\circ
      (\orivec z/ \orivec w \otimes id_{\orivec z})) 
      \otimes \omega \otimes \mu, S \otimes T}
    {T\colon \emptyset \to \set{\orivec w} \uplus W \qquad 
      \mu\colon W \to \emptyset}
    \;
    \textsf{(synth)}
  \end{gather*}
  The growing relation can be lifted to morphisms of $\cat{Lgp}$: $G
  \grow H$ iff $\omega, G' \grow \mu, H'$, where $\omega \circ G'$,
  $\mu \circ H'$ are the discrete decompositions of $G$ and $H$,
  respectively.
\end{definition}
Let $G \grow H$, according to \textsf{(hide)} and \textsf{(reveal)}
$H$ can toggle free sites in $G$ from visible to hidden and vice
versa, whereas \textsf{(tie)} allows $H$ to bind only pairs of sites
that are visible in $G$. The \textsf{(synth)} axiom adds new proteins
in the solution $H$ (possibly connected to proteins already in $G$) if
their ports are all linked to edges in $H$.

Note that the discrete decomposition of a protein link graph
induces a separation among protein interfaces and protein nodes.  An
example is given in Fig.~\ref{fig:growex}.
\begin{figure}[t]
  \centering
  \begin{tikzpicture}[scale=0.6, font=\small]%% growingsolution.tex

% \draw[help lines, step=0.5] (-6,-4) grid (6,4);
    
    \draw (0,0)
    +(-4.25,0) coordinate (centerA)
    +(4.5,0) coordinate (centerB);

    %%%%%%%%%%%%%%%%%%%%%%%%%%%%%%
    %%%%%%%%%%%%%%%%%%%%%%%%%%%%%%
    %%%			PARTE SINISTRA				%%%
    
    %%%%%%%%%%%%%%%%%%%%%%
    %%	Interfaccia esterna
    %%%%%%%%%%%%%%%%%%%%%%
    
    \draw (centerA)++(up:3)
    	+(left:3.5) coordinate (upperleft)
	+(right:3) coordinate (upperright);
    
    %%%%%%%%%%%%%%%%%%%%%%
     %%	Soluzione discreta
     %%%%%%%%%%%%%%%%%%%%%%

      \draw[fill=blue!60,scale=0.8] ($(centerA) + (2,1.5)$)
      [rotate around={-90:(centerA)}]
      \eoneprotein{e11a}{e12a}{e13a};
       
      \draw[fill=green!70,scale=0.8] ($(centerA) + (2,-2.25)$)
      [rotate around={-90:(centerA)}]
      \etwoprotein{e21a}{e22a};
      
      \draw
    	($(upperleft) + (down:2)$) coordinate (upleft) 
    	($(upperright) + (down:2)$)coordinate (upright);
      
      \draw ($(upleft)!(e12a)!(upright) + (right:1)$) node (z5) {$z_{5}$}
      (e12a) to[out=-10, in=-90] (z5);
      \draw ($(upleft)!(e11a)!(upright) + (left:1.5)$) node (z3) {$z_{3}$}
      (e11a) to[out=180, in=-90] (z3);
      \draw ($(upleft)!(e13a)!(upright) + (left:0.25)$) node (z4) {$z_{4}$}
      (e13a) to[out=175, in=-90] (z4);
      
      \draw ($(upleft)!(e21a)!(upright) + (right:1.5)$) node (z2) {$z_{2}$}
      (e21a) to[out=-10, in=-90] (z2);
      \draw ($(upleft)!(e22a)!(upright) + (right:0.25)$) node (z1) {$z_{1}$}
      (e22a) to[out=5, in=-90] (z1);
      
      \draw ($(z5) + (right:0.75)$) node (z6) {$z_{6}$}
      	(z6) to[out=-90,in=90] ($(z6) + (down:3.5)+(left:0.5)$) node[below] (x) {$x$};

    %%%%%%%%%%%%%%%%%%%%%%
    %%	Wiring
    %%%%%%%%%%%%%%%%%%%%%%
    
    \draw[-|] ($(upperleft)!(z5)!(upperright) + (left:0.75)$) node[above] (y1) {$y_{1}$} 
    	to ($(y1) + (down:0.75)$);
    \draw (z5) to ($(upperleft)!(z5)!(upperright)$) node[above] {$y_{2}$};
    \draw (z6) to ($(upperleft)!(z6)!(upperright)$) node[above] {$y_{3}$};
    \draw (z1) .. controls +(up:2.25) and +(up:2.25) .. (z4);
    \draw[-)] (z2) to ($(z2) + (up:1)$);
    \draw[-(] (z3) to ($(z3) + (up:1)$);
    
    %%%%%%%%%%%%%%%%%%%%%%%%%%%
    %%%%%%%%%%%%%%%%%%%%%%%%%%%
    %%			ETICHETTE				  %%
    
    \draw ($(upperleft)!0.5!(upleft)$) node[left=0.2cm] {$\omega$};
    \draw[decorate,decoration=brace]  ($(upleft) + (up:0.1)$) -- (upperleft);
    \draw ($(upperleft)!(x)!(upleft)$) coordinate (downleft)
    	($(upleft)!0.5! (downleft)$) node[left=0.2cm] {$S$};
    \draw[decorate,decoration=brace] (downleft) -- ($(upleft) + (down:0.1)$);

    %%%%%%%%%%%%%%%%%%%%%%%%%%%%%%
    %%%%%%%%%%%%%%%%%%%%%%%%%%%%%%
    %%%			PARTE DESTRA				%%%
    
    %%%%%%%%%%%%%%%%%%%%%%
    %%	Interfaccia esterna
    %%%%%%%%%%%%%%%%%%%%%%
    
    \draw (centerB)++(up:3)
    	+(left:3) coordinate (upperleft)
	+(right:5.5) coordinate (upperright);
    
    %%%%%%%%%%%%%%%%%%%%%%
     %%	Soluzione discreta
     %%%%%%%%%%%%%%%%%%%%%%

      \draw[fill=blue!60,scale=0.8] ($(centerB) + (2,3.25)$)
      [rotate around={-90:(centerB)}]
      \eoneprotein{e11a}{e12a}{e13a};
       
      \draw[fill=green!70,scale=0.8] ($(centerB) + (2,-2.25)$)
      [rotate around={-90:(centerB)}]
      \etwoprotein{e21a}{e22a};
      
      \draw[fill=red!60,scale=0.8]
      [rotate around={90:($(e11a)!0.5!(e21a)$)}] 
      ($(e11a)!0.5!(e21a)$) \mprotein{m1}{m2};
      
      \draw
    	($(upperleft) + (down:2)$) coordinate (upleft) 
    	($(upperright) + (down:2)$)coordinate (upright);
      
      \draw ($(upleft)!(e12a)!(upright) + (right:1)$) node (z5) {$z_{5}$}
      (e12a) to[out=-10, in=-90] (z5);
      \draw ($(upleft)!(e11a)!(upright) + (left:1.5)$) node (z3) {$z_{3}$}
      (e11a) to[out=180, in=-90] (z3);
      \draw ($(upleft)!(e13a)!(upright) + (left:0.25)$) node (z4) {$z_{4}$}
      (e13a) to[out=175, in=-90] (z4);
      
      \draw ($(upleft)!(e21a)!(upright) + (right:1.5)$) node (z2) {$z_{2}$}
      (e21a) to[out=-10, in=-90] (z2);
      \draw ($(upleft)!(e22a)!(upright) + (right:0.25)$) node (z1) {$z_{1}$}
      (e22a) to[out=5, in=-90] (z1);
      
      \draw ($(z2) + (right:0.75)$) node (w1) {$w_{1}$}
      (m1) to[out=180, in=-90] (w1);
       \draw ($(z3) + (left:0.75)$) node (w2) {$w_{2}$}
      (m2) to[out=0, in=-90] (w2);
      
      \draw ($(z5) + (right:0.75)$) node (z6) {$z_{6}$}
      	(z6) to[out=-90,in=90] ($(z6) + (down:3.5)+(left:0.5)$) node[below] (x) {$x$};
    
    %%%%%%%%%%%%%%%%%%%%%%
    %%	Wiring
    %%%%%%%%%%%%%%%%%%%%%%
    
    \draw[-|] ($(upperleft)!(z5)!(upperright) + (left:0.75)$) node[above] (y1) {$y_{1}$} 
    	to ($(y1) + (down:0.75)$);
    \draw (z5) to ($(upperleft)!(z5)!(upperright)$) node[above] {$y_{2}$};
    \draw (z6) to ($(upperleft)!(z6)!(upperright)$) node[above] {$y_{3}$};
    \draw (z1) .. controls +(up:2.25) and +(up:2.25) .. (z4);
    \draw[-(] (z2) to ($(z2) + (up:1)$);
    \draw (z3) .. controls +(up:1) and +(up:1) .. (w2);
    \draw[-(] (w1) to ($(w1) + (up:1)$);
    
    %% SIMBOLO CENTRALE
    \draw ($(center) + (right:0.5)$) node {\Huge $\grow$};
    
    %%%%%%%%%%%%%%%%%%%%%%%%%%%
    %%%%%%%%%%%%%%%%%%%%%%%%%%%
    %%			ETICHETTE				  %%
    
    \draw ($(upperright)!0.5!(upright)$) node[right=0.2cm] {$\mu$};
    \draw[decorate, decoration=brace] (upperright) -- ($(upright) + (up:0.1)$);
    \draw ($(upperright)!(x)!(upright)$) coordinate (downright)
    	($(upright)!0.5! (downright)$) node[right=0.2cm] {$T$};
    \draw[decorate, decoration=brace] ($(upright) + (down:0.1)$) -- (downright);
    
    \end{tikzpicture}
  \caption{An example of a growing solution: $\omega, S \grow \mu, T$.}
  \label{fig:growex}
\end{figure}
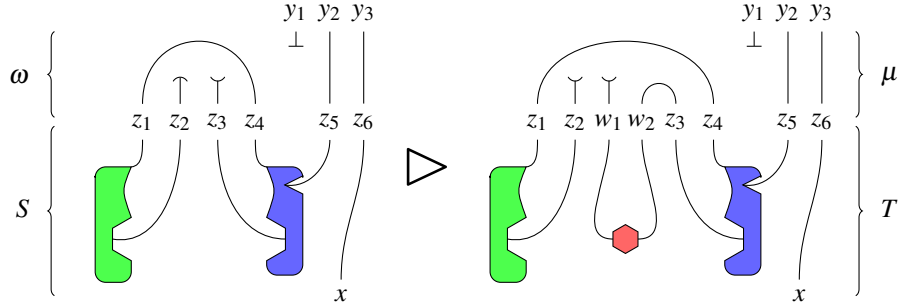

Now we can introduce the notion of \emph{protein reactive systems},
where rewriting rules respect causality. This is guaranteed by
requiring rules to either grow or shrink solutions, representing
complexations and decomplexations respectively.
\begin{definition}[Protein reactive systems]
  A rule $(l,r)$ on protein link graphs is \emph{monotone} if $l \grow
  r$ and $r$ is connected; $(r,l)$ instead is \emph{antimonotone}.

  A \emph{protein reaction system} (PRS) is a reactive system over
  $\cat{Lgp}$, whose all reaction rules are monotone or antimonotone.
\end{definition}

%\smallskip

%\noindent\textbf{Correspondence with the $\kappa$-calculus.}
We establish now a formal correspondence between protein reactive
systems and protein transition systems. We use the set of
$\kappa$-protein names $\mathcal{P}$ as the set of protein controls in
$\cat{Lgp}$, using the corresponding $\kappa$-arity function $s \colon
\mathcal{P} \to \mathbb{N}$.

Now, given a set $X$ of typed names, the encoding function
$\dsb{\cdot}_X$ mapping $\kappa$-solutions into $\cat{Lgp}$ is
% \vspace{-1ex}
\begin{gather*}
  \begin{aligned}
    \dsb{\mathbf{0}}_X & = X &
    \dsb{(x)(S)}_X & =
    /y{:\,}\mathsf{b} \circ
    \dsb{S\set{y:\mathsf{b}/x:\mathsf{t}}}_{X\uplus\set{y:b}}
    \quad \text{where } t\in \{\mathsf{f},\mathsf{b}\}
    \\
    \dsb{S\kcomp T}_X & = \dsb{S}_X \parallel \dsb{T}_X &
    \dsb{A(\rho)}_X & = X \parallel
    (
    ( \textstyle
    \bigparallel_{i=1}^{s(A)} \dsb{\rho(i)}^{z_{i}}_{X})\circ
    A_{z_{1}:\mathsf{f}\dots z_{s(A)}:\mathsf{f}}
    )
  \end{aligned} \\
  \text{where} \quad 
  \dsb{h}_{X}^{z} = /z{:\,}\mathsf{h} \qquad
  \dsb{v}_{X}^{z} = /z{:\,}\mathsf{v} \qquad
  \dsb{x}_{X}^{z} = x:\mathsf{b} / z:\mathsf{f} \text{ (if $x \in X$)}
\end{gather*}
Note that $\dsb{S}_X$ is defined only when $fn(S) \subseteq X$.
Fig.~\ref{fig:enck} shows an example.
\begin{figure}[t]
  \centering
  \begin{tikzpicture}[]\input{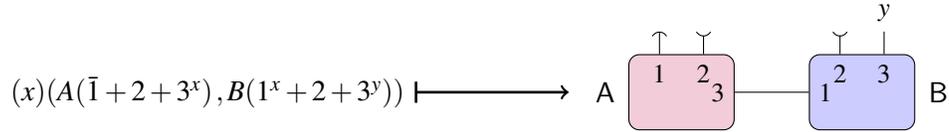}\end{tikzpicture}
  \caption{Encoding a $\kappa$-solution in a protein link graph.}
  \label{fig:enck}
\end{figure}

We can prove the following correspondence results.
\begin{proposition}[Syntax] \label{prop:kappasyntax}
  For any set of names $X$, $S\equiv T$ iff $\dsb{S}_X = \dsb{T}_X$.

  Moreover, for all $X$, the map $\dsb{\cdot}_X$ is surjective over
  the homset $\cat{Lgp}(\emptyset,X)$.
%   For any $L:\emptyset \to X$ in $\cat{Lgp}$, there exists $S$
%   such that $fn(S) \subseteq X$ and $\dsb{S}_X = L$.
\end{proposition}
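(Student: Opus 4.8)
The statement splits into a biconditional and a surjectivity claim, and the plan is to prove the forward implication by induction on $\equiv$ while obtaining both the backward implication and surjectivity from a single reading-off map that inverts $\dsb{\cdot}_X$ up to $\equiv$.

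First I would prove that $S\equiv T$ implies $\dsb{S}_X=\dsb{T}_X$ by induction on the derivation of $\equiv$. Since $\equiv$ is generated by $\alpha$-equivalence, the abelian-monoid laws for $\kcomp$, and the scope extension and extrusion laws for binders, it suffices to check each generator. The monoid laws are immediate from the clause $\dsb{S\kcomp T}_X=\dsb{S}_X\parallel\dsb{T}_X$ together with the fact that $\parallel$ is associative and commutative with the identity link graph $X=\dsb{\mathbf{0}}_X$ as unit. The scope laws and $\alpha$-equivalence reduce to standard properties of the name closure appearing in $\dsb{(x)(S)}_X=/y{:}\mathsf{b}\circ\dsb{\cdots}$: two independent closures commute, an unused binder may be discarded, a binder may be pushed across a parallel factor whose support it avoids, and the choice of the fresh $\mathsf{b}$-name $y$ is immaterial. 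Each of these is a one-line equality of link graphs, so the forward direction is essentially bookkeeping.

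For the remaining claims I would construct a decoding $\langle\cdot\rangle$ sending a ground protein link graph $G\in\cat{Lgp}(\emptyset,X)$ to a $\kappa$-solution, and show it is a two-sided inverse of $\dsb{\cdot}_X$ up to $\equiv$. Concretely, each node of $G$ with control $A$ contributes a factor $A(\rho)$ whose interface $\rho$ is read off port by port: a port linked to a $\mathsf{v}$-edge yields a visible site, a $\mathsf{h}$-edge a hidden site, and a port carrying a $\mathsf{b}$-link a tied site named after that link; $\mathsf{b}$-edges (two peers) are then bound by an outermost $(x)$, whereas $\mathsf{b}$-typed outer names of $X$ remain free. The \ref{eq:protsol} constraint guarantees precisely that $\mathsf{v}$- and $\mathsf{h}$-links have a single peer and bonds at most two, which is exactly graph-likeness of the decoded solution, so $\langle G\rangle$ is well defined.

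Surjectivity then follows from $\dsb{\langle G\rangle}_X=G$, which I would verify by comparing the two link graphs on places and on links: reading off $\rho$ and re-encoding reproduces the wiring at each port, and the outermost binders reproduce exactly the $\mathsf{b}$-edges. For the backward implication I would prove $\langle\dsb{S}_X\rangle\equiv S$ by structural induction on $S$, whence $\dsb{S}_X=\dsb{T}_X$ gives $S\equiv\langle\dsb{S}_X\rangle=\langle\dsb{T}_X\rangle\equiv T$. I expect the main obstacle to be exactly this last induction: congruence rather than equality is unavoidable, since the encoding forgets the order of the $\kcomp$-components, the placement of binders, and the identity of bound names, and one must track how a bound $x$ is renamed to a fresh $\mathsf{b}$-name, closed, and recovered by $\langle\cdot\rangle$ as a possibly differently named binder, so that the original and decoded solutions agree only up to $\alpha$-equivalence and scope rearrangement.
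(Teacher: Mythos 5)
Your proposal is correct, but a strict comparison is impossible here: the paper states this proposition \emph{bare}, with no proof at all (the proof of the Semantics theorem merely cites it), so the benchmark is the intended standard adequacy argument, and yours is exactly that --- induction on the generators of $\equiv$ for soundness, plus a read-back map $\langle\cdot\rangle$ satisfying $\dsb{\langle G\rangle}_X = G$ and $\langle\dsb{S}_X\rangle\equiv S$, which yields completeness and surjectivity in one stroke. One point you should make explicit, since both you and the paper gloss over it: your claim that discarding an unused binder is ``a one-line equality of link graphs'' holds only if link graphs are taken \emph{lean} (idle edges elided), because $\dsb{(x)(\mathbf{0})}_X = /y{:\,}\mathsf{b}\circ(X\uplus\set{y})$ carries an idle $\mathsf{b}$-edge that strict equality would distinguish from $\dsb{\mathbf{0}}_X = X$ --- and such an idle edge would even violate \ref{eq:protsol}, pushing the encoding outside $\cat{Lgp}$; adopting leanness repairs both issues and leaves the rest of your argument untouched, since your decoding never meets idle edges anyway (\ref{eq:protsol} excludes them from $\cat{Lgp}(\emptyset,X)$). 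Similarly, your decoding tacitly assumes every name in $X$ is $\mathsf{b}$-typed --- the encoding links ports only to edges or to $\mathsf{b}$-typed outer names, never to $\mathsf{v}$-, $\mathsf{h}$- or $\mathsf{f}$-typed ones --- which is the reading the surjectivity claim needs and is worth one sentence in the write-up.
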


\begin{definition}[$\kappa$-PRS]
  A \emph{$\kappa$-PRS} is a reactive system over $\cat{Lgp}$ whose
  set of rules is defined as  
  \mbox{
  $\mathcal{R} = \set{(\dsb{l}_{fn(l)},\dsb{r}_{fn(r)}) \mid (l,r)\in \mathcal{R}_\kappa}$
  }, where $\mathcal{R}_\kappa$ is a $\kappa$-calculus protein transition system.
\end{definition}

\begin{proposition} \label{prop:kprs}
  Every $\kappa$-PRS is a protein reactive system.
\end{proposition}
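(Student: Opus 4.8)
The plan is to unfold the definition of a $\kappa$-PRS and check that each of its rules meets the monotonicity requirement of a protein reactive system. By definition these rules are the pairs $(\dsb{l}_{fn(l)},\dsb{r}_{fn(r)})$ for $(l,r)\in\mathcal{R}_\kappa$, and every $\kappa$-reaction in $\mathcal{R}_\kappa$ is monotone or antimonotone. Since an antimonotone $\kappa$-reaction is by definition the reverse of a monotone one, and a $\cat{Lgp}$-rule $(r',l')$ is antimonotone exactly when $(l',r')$ is monotone, it suffices to treat the monotone case. So suppose the reaction is monotone with reactum $r=(\tilde x)R$, i.e.\ $\tilde x\vdash l\grow R$ with $l$ and $(\tilde x)R$ graph-like and $R$ connected; I would show $\dsb{l}_{fn(l)}\grow\dsb{(\tilde x)R}_{fn(r)}$ in $\cat{Lgp}$ and that $\dsb{(\tilde x)R}_{fn(r)}$ is connected, from which the monotone (resp.\ antimonotone) case follows.

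The core is therefore a correspondence lemma stating that the encoding $\dsb{\cdot}$ transports the $\kappa$-growing relation into the $\cat{Lgp}$-growing relation. I would prove this by induction on the derivation of $\tilde x\vdash S\grow T$, exploiting the fact that, by construction, $\dsb{S}_X$ factors as a wiring $\omega_S$ (produced by the interface clauses $\dsb{h}^z,\dsb{v}^z,\dsb{x}^z$ and the binder clause $/y{:\,}\mathsf{b}$) composed with a discrete solution $D_S$ (the proteins with fresh $\mathsf{f}$-typed ports); this is precisely the discrete decomposition along which the $\cat{Lgp}$-growing relation is defined. At the interface level each $\kappa$-axiom maps to a single wiring axiom: $\bar i\grow i$ to \textsf{(reveal)}, $i\grow\bar i$ to \textsf{(hide)}, $i\grow i^x$ to one half of \textsf{(tie)}, the identity grow $\rho\grow\rho$ to reflexivity built from \textsf{(empty)} and \textsf{(tens)}, and $\rho+\rho'\grow\sigma+\sigma'$ to \textsf{(tens)}. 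At the solution level, $\mathbf{0}\grow\mathbf{0}$ matches \textsf{(empty)}, the rule for $S\kcomp A(\rho)\grow T\kcomp A(\sigma)$ is obtained by combining the inductive hypothesis for $S\grow T$ with the interface grow via \textsf{(tens)}, and the synthesis rule, whose side condition $fn(\sigma)\subseteq\tilde x$ guarantees that every port of the new protein is linked inside $\tilde x$, matches \textsf{(synth)}.

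Two points require care, and the second is the real obstacle. First, connectedness must be transferred: I would show by structural induction on the encoding that a connected $\kappa$-solution $R$ yields a connected link graph $\dsb{R}$, using that $\kcomp$ is encoded by the parallel product $\parallel$, which fuses shared names into shared links, so that name-sharing (the $\kappa$-connectedness witness) becomes link-sharing. Second, and most delicate, is the treatment of bonds and the fresh names $\tilde x$. A $\kappa$-bond arises from two separate single-site tie steps $i\grow i^x$ and $j\grow j^x$ sharing the same fresh $x$, which is only closed into an actual edge by the surrounding binder $(x)$; under the encoding this binder becomes the closure $/y{:\,}\mathsf{b}$, whereas the $\cat{Lgp}$ axiom \textsf{(tie)} merges two visible closures $/z_1{:\,}\mathsf{v}\otimes/z_2{:\,}\mathsf{v}$ into a single bond closure in one step. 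Hence the $\kappa$-derivation cannot be translated step-by-step; I would instead pair up the two tie steps sharing each $x\in\tilde x$ and schedule the single corresponding \textsf{(tie)} application at the point where $(x)$ binds, using the graph-like hypothesis (binders tie exactly $0$ or $2$ occurrences) to guarantee that such a pairing always exists. Reassembling the $\cat{Lgp}$-grow derivation while keeping the discrete decomposition coherent with this re-scheduling is the main bookkeeping difficulty; Proposition~\ref{prop:kappasyntax} is used throughout to treat encodings up to structural congruence.
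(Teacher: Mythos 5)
Your proposal is correct and takes essentially the same route as the paper, whose entire proof is the one-line sketch ``just prove that growing relations on $\kappa$-solutions and protein link graphs correspond'': you carry out exactly that correspondence, by induction on the derivation of $\tilde{x}\vdash S\grow T$, using the discrete decomposition of $\dsb{S}_X$ as the pivot. Your additional care in pairing the two $i\grow i^{x}$ interface steps per binder against the single \textsf{(tie)} axiom (justified by graph-likeness), and in transferring connectedness through $\parallel$, is a sound filling-in of details the paper leaves implicit.
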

\begin{proof}
  (sketch) Just prove that growing relations on $\kappa$-solutions and
  protein link graphs correspond.
\end{proof}

\begin{theorem}[Semantics]\label{prop:corrkappa}
  $L\leadsto (\tilde{x})R$ iff $\dsb{L}_X\rightarrowtriangle
  \dsb{(\tilde{x})R}_X$, where $\tilde{x}\cap X =\emptyset$.
\end{theorem}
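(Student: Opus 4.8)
The plan is to prove the two implications separately, exploiting the fact that both reaction relations are the \emph{least} relations containing the base reactions and closed under the same algebra of operations: context embedding on the link-graph side, and $\equiv$, composition, and name restriction on the $\kappa$ side. The bridge between these two closures is that the encoding $\dsb{\cdot}_X$ is compositional, $\dsb{S \kcomp T}_X = \dsb{S}_X \parallel \dsb{T}_X$, and $\dsb{(x)S}_X$ is obtained by post-composing a restriction wiring $/y{:\,}\mathsf{b}$, so each syntactic constructor of $\kappa$-solutions is realised by a link-graph context. Throughout I would use Proposition~\ref{prop:kappasyntax} to pass between $\equiv$ on solutions and equality of link graphs, and Proposition~\ref{prop:kprs} (equivalently, the established correspondence of the two growing relations) to handle the base rules.

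For the forward direction ($\Rightarrow$) I would argue by induction on the derivation of $L \leadsto (\tilde{x})R$. In the base case $(L,(\tilde{x})R)\in\mathcal{R}_\kappa$, so by definition of the $\kappa$-PRS the pair $(\dsb{L}_{fn(L)},\dsb{(\tilde{x})R}_{fn(L)})$ is a rule of $\mathcal{R}$; taking the identity context (the encoded redex is ground, so the parameter is trivial) and widening the interface to $X$ yields $\dsb{L}_X \rightarrowtriangle \dsb{(\tilde{x})R}_X$. For closure under $\equiv$, Proposition~\ref{prop:kappasyntax} turns $L\equiv L'$ and $R'\equiv R$ into equalities of encodings, and reactions are insensitive to equal representatives. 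For closure under composition and restriction, compositionality of $\dsb{\cdot}_X$ lets me re-use the very context $C$ witnessing the inductive reaction: prepending $\parallel\dsb{U}_X$ for a parallel component $U$, or the wiring $/y{:\,}\mathsf{b}$ for a binder, produces a larger context in which the same redex fires, and the side-condition $\tilde{x}\cap X=\emptyset$ ensures the fresh bond names introduced by the reaction do not clash with the added material.

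For the backward direction ($\Leftarrow$) I would unfold the definition of the BRS reaction relation: $\dsb{L}_X\rightarrowtriangle\dsb{(\tilde{x})R}_X$ means there are a rule $(\dsb{l}_{fn(l)},\dsb{r}_{fn(r)})\in\mathcal{R}$ arising from some $(l,r)\in\mathcal{R}_\kappa$ (with $r$ a restricted solution $(\tilde{x})R$ and $fn(l)=fn(r)$, the free interface being preserved by (anti)mono\-tonicity) and a context $C$ with $\dsb{L}_X=C\circ\dsb{l}_{fn(l)}$ and $\dsb{(\tilde{x})R}_X=C\circ\dsb{r}_{fn(r)}$. The key step is to reflect this link-graph decomposition back to the $\kappa$-calculus: using the surjectivity half of Proposition~\ref{prop:kappasyntax}, I would exhibit a $\kappa$-solution $U$ (a ``rest of the world'') and a binder prefix such that $C\circ(-)$ corresponds, under the encoding, to the operation $S\mapsto(\text{binders})(S\kcomp U)$. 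Since $\leadsto$ is closed exactly under these two operations, the base reaction $(l,r)\in\mathcal{R}_\kappa\subseteq{\leadsto}$ then propagates to $L\leadsto(\tilde{x})R$, and Proposition~\ref{prop:kappasyntax} identifies the result up to $\equiv$.

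The main obstacle is precisely this reflection step in the backward direction. A link-graph context $C$ is a priori an arbitrary morphism of $\cat{Lgp}$, not manifestly the image of a $\kappa$-context, so I must show that the image of $\dsb{\cdot}$ is closed under the decompositions that arise in a reaction --- equivalently, that every embedding of the redex $\dsb{l}_{fn(l)}$ inside a protein solution factors through encodable pieces. This requires tracking the type discipline ($\mathsf{b}$-typed bonds versus $\mathsf{f}/\mathsf{v}/\mathsf{h}$) so that each link of the context is either an internal bond, hence an $(x)$-binder at the $\kappa$ level, or an open name in $X$, and checking that the fresh bond edges created by the monotone rule match the freshly bound $\tilde{x}$ of $(\tilde{x})R$ under the hypothesis $\tilde{x}\cap X=\emptyset$. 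The connectivity requirement on $r$ is what guarantees the redex occupies a single connected sub-solution, so that the factorisation respects the structural-congruence laws used by $\leadsto$.
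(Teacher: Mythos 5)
Your proof takes essentially the same route as the paper, whose entire argument for this theorem is the one-liner ``Follows from Proposition~\ref{prop:kappasyntax} and \ref{prop:kprs}'': both directions of your argument hinge on exactly those two results, using the syntactic correspondence and surjectivity of $\dsb{\cdot}_X$ to pass between $\equiv$-classes of solutions and ground protein link graphs, and the correspondence of growing relations to handle the (anti)monotone base rules. Your elaboration---the induction over the closure of $\leadsto$, and in particular the backward-direction reflection of an arbitrary $\cat{Lgp}$ context into a solution $U$ plus $\mathsf{b}$-typed binders, justified by the type discipline ($\mathsf{b}$-typed inner names can only target $\mathsf{b}$-typed links) and the connectedness of the reactum---is correct and supplies precisely the details the paper leaves implicit.
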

\begin{proof}
  Follows from Proposition~\ref{prop:kappasyntax} and \ref{prop:kprs}. % \qed
\end{proof}

\section{Biological bigraphs}\label{sec:biobig}

\subsection{Biobigraphs}
%\looseness=-1
A cellular membrane is a closed surface that acts as a container
for biochemical solutions. One of the most remarkable properties of biological
membranes is that they form a two-dimensional fluid (a lipid bi-layer) in which
hydrophobic proteins (often only their hydrophobic sub-unit) can be immersed 
and freely diffuse. As a consequence membranes are localities where protein 
nodes can reside in them. We will represent them as two nested nodes, 
$\mcis$ and $\mext$ (for \emph{cytosolic} and \emph{extra-cytosolic} 
layers), so that membranes behaves as hydrophobic protein containers.
Formally, the cellular membrane signature is
\begin{equation*}
  \mathcal{C} \defeq \set{ \mcis \colon 0 , \mext \colon 0 }
\end{equation*}
where both controls are active, since biological reaction can happen
inside a cell.

Membranes come in different kinds, distinguished mostly by the
\emph{trans-mem\-bra\-ne} proteins embedded in them. Such proteins
have a consistent orientation and can act on both sides of the
membrane simultaneously, propagating external signals inside the
membrane-compartment (and vice versa).  Membrane behaviour is
determined by proteins embedded in them, hence membranes inherit the
consistent orientation of their trans-membrane proteins.  In our model
we express trans-membrane proteins as complexes formed by an
hydrophobic unit linked to one or more hydrophilic units (see
Fig.~\ref{fig:memb}(b)), in order to make explicit the protein
orientation and, at the same time, to formally explain the orientation
of membranes at the level of proteins
\cite{cardelli05:amsb,cardelli08:tcs}.

Membranes themselves are impermeable containers and they do not allow
large molecules (such as proteins) to simply traverse them. At the
same time protein (de-)complexation cannot take place through the
membrane, because complexations take place only if their protein
domains are close to each other.  As a consequence, protein complexes
like the one in Fig.~\ref{fig:memb}(a) are not biologically plausible.
In order to get a bigraphical category of only well-formed biological
bigraphs, we rule out these morphisms by applying the following
condition:
%\vspace{-1ex}
\begin{equation}
  \text{if } v,w \text{ peer and }
  	prnt(v) = prnt^{k}(w) 
  	\text{, then }
	\neg (\exists\, 1 \leq i < j < k. \;
        ctrl(prnt^{i}(v)), ctrl(prnt^{j}(w)) \in \mathcal{C})
  \tag{\textsc{Imp}} \label{eq:imp}
%\vspace{-1ex}
\end{equation}
which says that two nodes cannot be linked if there are more than two
occurrences of a membrane control along the path to the common
ancestor.
\begin{figure}[t]
  \centering
    \begin{tikzpicture}[scale=0.65]\input{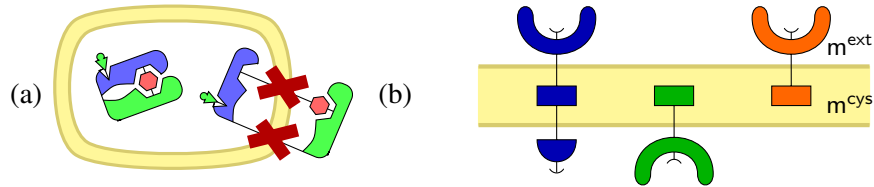}\end{tikzpicture}
    \caption{(a) Solution not respecting membrane impermeability;
      (b) Trans-membrane proteins with different orientations.}
  \label{fig:memb}
\end{figure}

This is not the only well-formedness condition to be satisfied
by a biological bigraph. In fact, with respect to the bi-layered membrane
representation, we have to rule out other ill-formed bigraphs.
\emph{A priori} one could insert $\mext$ directly into
another $\mext$, obtaining an ill-formed membrane. 
Moreover we want hydrophilic (polar) proteins not to reside on 
membranes and, analogously, hydrophobic (apolar) proteins not to
reside in water. Hence, we have to rule out systems violating 
the polarity constraint, by means of a sorting over the place graph.  
To this end, we assume that the protein signature is partitioned in 
\emph{polar} $\mathcal{P}$ and \emph{apolar} $\mathcal{A}$ controls; 
moreover, we say that $\mext$ is polar and $\mcis$ is apolar. 
Then, the polarity constraint can be expressed by the two following 
predicates (controls $\pcarry,\pmemb$ will be introduced and explained later):
\begin{align}
  \text{if } ctrl(v) \text{ is polar, then } prnt(v)
  \text{ is apolar, or } ctrl(prnt(v)) = \pcarry
  \tag{\textsc{Polar}} \label{eq:polar} \\
  \text{if } ctrl(v) \text{ is apolar, then } prnt(v)
  \text{ is polar, or } ctrl(prnt(v)) = \pmemb
  \tag{\textsc{Apolar}} \label{eq:apolar}
\end{align}

However the polarity condition does not forbid wrong systems as
those shown in Fig.~\ref{fig:bignes}(a) and (c).
\begin{figure}[t]
  \centering
  \begin{tikzpicture}[scale=0.6,font=\scriptsize]\input{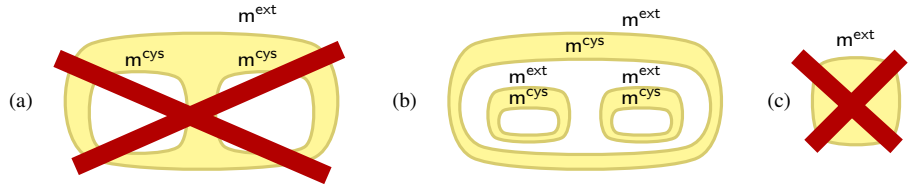}\end{tikzpicture}
  \caption{(a) Wrong nesting. (b) Good nesting. (c) Micelle.}
  \label{fig:bignes}
\end{figure}
To this end, we need also the following condition to ensure the 
correct nesting of membrane layers:
%\vspace{-1ex}
\begin{equation}
  \text{if } ctrl(v) = \mext \text{ then } \exists! v' \text{ s.t.~}
  ctrl(v') = \mcis \wedge prnt(v') = v \tag{\textsc{2Layer}} \label{eq:bi-layer}
%\vspace{-1ex}
\end{equation}
The vice versa is not necessary, indeed the system in
Fig.~\ref{fig:bignes}(b) is well formed.

Let us now consider the dynamics of biological systems.  As expected,
we recover all protein reaction rules from PRSs but extended with
locations, and requiring that proteins cannot change their position.
Moreover, taking into consideration membranes, our model must deal
also with membrane-transport.  Membrane reconfigurations are fully
justified at the protein-level; for instance, extension of pseudopods
during a phagocytosis is triggered by the reorganization of the actin
cytoskeleton \cite{au:phago}.  In our model the protein justification
for membrane transport is ensured by splitting membrane
reconfigurations into two steps:
\begin{enumerate}
\item The first step adds some special \emph{mobility controls}, which
  describe the \emph{intent} of executing an action
  (e.g.~phagocytosis) and its \emph{direction} (e.g.~what is eating
  what). Moreover, these controls ``freeze'' the sub-systems, until
  the action will be completed. 
\item The second step performs effectively the action, operating on
  the place graph, changing the position of systems (e.g., one is
  moved inside the other), removing the mobility controls and
  adding or removing a double layer.  
\end{enumerate}
In order to implement this scenario, we need two new kind of rules:
\emph{(mobility) introduction rules}, for the first step, and
\emph{(mobility) commitment rules,} for the second step, i.e., the
membrane reconfiguration.

Let us discuss the latter first. Actually, all membrane interactions
can be reduced to \emph{pinching} and \emph{fusing} of membranes
\cite{cardelli08:tcs}.
%(we will see an example encoding in Section~\ref{sec:branenc}).  
Hence, we add exactly two
mobility commitment rules, namely $pinch$ and $fuse$,
depicted in Fig.~\ref{fig:pfrules}.
\begin{figure}[t]
  \centering
  \begin{tikzpicture}[scale=0.7]\input{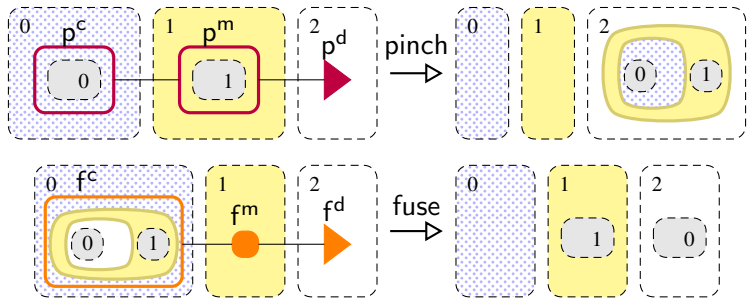}\end{tikzpicture}%\vspace{-0.5ex}
  \caption{Mobility commitment rules: \textsf{pinch} and
    \textsf{fuse}.  These rules preserve \emph{bitonality} (regions with same pattern
    have the same tonality).}%\vspace{-2.5ex}
  \label{fig:pfrules}
\end{figure}
These rules use some special controls to describe which parts of the
system should be moved from a location to another; these controls are
defined in the following \emph{mobility signature} 
\begin{equation*}
  \mathcal{M}\defeq \mathcal{M}_{P} \cup \mathcal{M}_{F}
  \quad \text{where} \quad
  \mathcal{M}_{P} \defeq \set{\pcarry\colon 1, \pmemb\colon 2, \pdir\colon 1} 
  \qquad
  \mathcal{M}_{F} \defeq \set{\fcell\colon 1, \fmemb\colon 2, \fdir\colon 1}
\end{equation*}
All mobility controls are polar, except for $\pmemb$ and $\fmemb$;
moreover $\pdir,\fmemb,\fdir$ are atomic whereas
$\pcarry,\pmemb,\fcell$ are passive, so that they temporally
``freeze'' the part of system that will be moved.  Notice that the
rules \textsf{pinch} and \textsf{fuse} are \emph{bitonal}, in the
sense of \cite{cardelli04:bc, cardelli08:tcs}.  Preservation of
bitonality means that reactions must preserve the even/odd parity with
which components are nested inside membranes (see
Figure~\ref{fig:pfrules}).

Mobility controls just help a mobility reaction to be modeled, and
there is not a biological reason to admit a mobility node to be peered
with a protein one.  Moreover, mobility controls make sense only if
tripled-up as in Figure~\ref{fig:pfrules}, hence we allow only these
linkage combinations.  In addition, mobility controls cannot be placed
freely. For example, there is no biological motivation to put
$\pcarry$ and $\pdir$ in the same location, since no cell can ingest
itself; similar observations are valid for fusions.  Hence, in order
to discipline these rules, we introduce a sorting to forbid illegal
membrane reconfigurations:
\begin{align}
&\begin{array}{l@{}l}
\text{if } link((v, i)) &{} = link((w,j)) \text{ then }
(ctrl(v), ctrl(w) \in \mathcal{P}\cup\mathcal{A}) \vee {} \\
& \Big( ctrl(v) = \pmemb \wedge \big( 
            (ctrl(w) = \pcarry \wedge i=0) \vee 
            (ctrl(v) = \pdir \wedge i=1) \big) \Big) \vee {} \\
& \Big( ctrl(v) = \fmemb \wedge \big( 
            (ctrl(w) = \fcell \wedge i=0) \vee 
            (ctrl(v) = \fdir \wedge i=1) \big) \Big)
\end{array} \notag \\
&\text{and if } 
ctrl(v) \in \mathcal{M} \text{ then } \forall i \in ar(ctrl(v)). \;
	\neg\exists e. \; link^{-1}(e) = \set{(v,i)}
\tag{\textsc{Mobil}} \label{eq:mob} 
\\[0.5ex]
&\begin{aligned}
\text{if }
  (ctrl(v)=\pcarry, ctrl(w)=\pdir) \vee {} &(ctrl(v)=\fcell, ctrl(w)=\fdir) \\
  & \text{ and $u$ peered with $v$ and $w$, then } prnt(v) \neq prnt(w)
\end{aligned}
\tag{\textsc{Dir}} \label{eq:dir} \\[0.5ex]
&\text{if } ctrl(v) \in \mathcal{M} \text{ then } \forall
  k>0. \; ctrl(prnt^k(v)) \notin \set{\pcarry,\pmemb,\fcell} 
\tag{\textsc{NoNesting}} \label{eq:nonest} \\[0.5ex]
&\text{if } prnt(v) = prnt(w) \text{ and } ctrl(prnt(v)) = \fcell \text{ then }
  v = w \text{ and } ctrl(v) = \mext
\tag{\textsc{Fuse}} \label{eq:fuse}
\end{align}
Intuitively, \ref{eq:mob} forbids that a mobility node peers with a
protein node, forcing it to be linked with another mobility node of
the same type ($P$ or $F$).  \ref{eq:dir} disallows mobility controls
$\pcarry$, $\pdir$ and $\fcell$, $\fdir$ to reside in the same
location.  Note that \ref{eq:imp} forces mobility controls triples to
do not violate the bi-layer impermeability, consequently such triples
are connected in a well-formed manner.  \ref{eq:nonest} guarantees
that mobility controls cannot be nested, hence disallows ambiguities
on how performs actions.  Finally, \ref{eq:fuse} forces the control
$\fcell$ to have only one child of type $\mext$, because fusion can
take place only by two cells.

We can now define \emph{biological bigraphs} as the bigraphs which
satisfy all the sorting conditions we have given so far.
\begin{definition}[Biological bigraphs]\label{def:biobg}
  Given two sets $\mathcal{P}, \mathcal{A}$ of polar and apolar
  proteins, a \emph{biological signature} is an extension of a protein
  signature and it is defined as $\biosig(\mathcal{P},\mathcal{A})
  \defeq \langle
  \mathcal{C}\cup\mathcal{P}\cup\mathcal{A}\cup\mathcal{M}, ar,
  \set{\textsf{h},\textsf{v}} \rangle$ with the condition that all
  controls in $\mathcal{P}\cup\mathcal{A}$ are atomic.

  The category of \emph{biological bigraphs,} denoted by
  $\cat{BioBg}(\biosig(\mathcal{P},\mathcal{A}))$, is the category of
  bigraphs over $\biosig(\mathcal{P},\mathcal{A})$ sorted by
  the following predicate:
  \begin{equation*}
    \ref{eq:protsol} \wedge \ref{eq:polar} \wedge \ref{eq:apolar}
    \wedge \ref{eq:bi-layer} \wedge \ref{eq:imp} \wedge \ref{eq:mob}
    \wedge \ref{eq:dir} \wedge \ref{eq:nonest}\, .
  \end{equation*}
\end{definition}
%In next subsection we will see that the link graph underlying a
%biological bigraph is always a protein link graph, that is, a valid
%protein solution.
%  i.e., if we
% dissolve all membranes, we obtain a protein solution.

Now, let us consider the mobility introduction rules.  Even if
membrane reconfigurations are of only two kinds (pinching and fusing),
they can be triggered by distinct of protein interactions (indeed no a
single kind of membrane internalization exists, e.g.~pinocytosis and
phagocytosis).  Hence, how mobility controls are introduced is left to
the modeler (possibly a biologist), so that he can distinguish between
the variety of signaling events that cause a membrane
transformation. However, not all introductions of mobility controls
are allowed: we want introduction rules not to perform protein
reactions. Formally:
%\begin{definition}[Connection]
%  Two cells are \emph{connected} if there exist a complex protein such
%  that it has nodes inside the $\mext$s of both cells.

%  A biobigraph is \emph{connected} if its underlying protein solution
%  is connected, and the set of cells is a connected component.
%\end{definition}
%Note that connection is not preserved by composition.
\begin{definition}[Introduction rules]
  A \emph{mobility introduction rule} is a reaction rule $(L,R)$ over 
  biobigraphs such that
  \begin{itemize}
  \item $ctrl(V_L) \cap \mathcal{M} = \emptyset$ and $ ctrl(V_L) \cap
    (\mathcal{P}\cup\mathcal{A}) \neq \emptyset$;
  %\item $L$ has at least two protein nodes and it is connected;
  \item $L = (L_0' \parallel L_1' \parallel L_2') \circ
    (L_0'' \parallel L_1'' \parallel L_2'')$,
    where $L_i',L_i''$ of width 1 for $i\in \set{0,1,2}$, \\
    $R = (L_0' \parallel L_1' \parallel L_2') \circ ( (/x,y{:
      \,}\mathsf{h} \circ (\pcarry_x \parallel (\pmemb_{xy} \mid
    id_1) \parallel (\pdir_y \mid id_1))) \otimes id_{Z}) \circ
    (L_0'' \parallel L_1'' \parallel L_2'')$, and \\
    every $z\in Z$ is the target link of two distinct ports or of a
    port of a node in $L_2''$.
%     there exist two nodes $v,w$ of $L_0'' \parallel L_1''$ such that
%     $link(v,i) = link(w,j) = z$ for some ports $i,j$, or there exists
%     a node $u$ of $L_2''$ such that $link(u,k) = z$ for some port $k$.
  \item $L = (L_0' \parallel L_1' \parallel L_2') \circ
    (L_0'' \parallel L_1'' \parallel L_2'')$,
    where $L_i',L_i''$ of width 1 for $i\in \set{0,1,2}$ and \\
    $R = (L_0' \parallel L_1' \parallel L_2') \circ ( (/x,y{:
      \,}\mathsf{h} \circ (\fcell_x \parallel (\fmemb_{xy} \mid
    id_1) \parallel (\fdir_y \mid id_1))) \otimes id_{Z}) \circ
    (L_0'' \parallel L_1'' \parallel L_2'')$.
  \end{itemize}
\end{definition}

The first condition states that a membrane reconfiguration can be
justified only by protein interactions and not by the presence of
mobility controls. In both other conditions, we require that the left
hand side of the rule can be split into two parts of width
3. $(L_0'' \parallel L_1'' \parallel L_2'')$ specifies the proteins
that effectively takes part in the mobility action, whilst
$(L_0' \parallel L_1' \parallel L_2')$ describes any protein not
involved in the action but required to trigger it.  The extra
condition for pinch introduction guarantees that the
related commitment rule always takes place, i.e., it does not stuck
for not violating the impermeability condition.

We ensure that no protein reactions can be performed, since the redex
$L$ and reactum $R$ only differs from \emph{an} occurrence of a
mobility triple.
In Section~\ref{sec:examples} we will give two examples of how
mobility controls and introduction rules are used to model membrane
transport.

Now, we can define the notion of biological reactive systems.
\begin{definition}[BioRS]\label{def:biors}
  A \emph{biological reactive system} (BioRS) is a reactive system
  over the category of biobigraphs
  $\cat{BioBg}(\mathcal{K}_B(\mathcal{P},\mathcal{AP}))$ and equipped
  with rules $\mathcal{R}_B = \set{\mathsf{pinch},\mathsf{fuse}} \cup
  \mathcal{R}$, where $\mathcal{R}$ is a given set of protein and
  introduction rules.
\end{definition}

\subsection{Relating biological bigraphical reactive systems}
In this section we establish formal connections between biobigraphs
and protein link graphs and ``mobility-only'' bigraphs, by means of
two functors; this allows to focus on one biological aspect per time.

The \emph{protein functor} $\Fp\colon
\cat{BioBg}(\biosig(\mathcal{P},\mathcal{A})) \to
\cat{Lgp}(\mathcal{P}\cup \mathcal{A})$ forgets all the membrane and
mobility controls and polarity aspects of a system, ``flattening'' it
to a protein solution. Formally:
\begin{align*}
  & \Fp(\langle n,X\rangle) = X \\
  & \Fp((V,E,ctrl,edge,prnt,link)) = 
   (V', E', ctrl|_{V'}, edge|_{E'}, link|_{dom(link)\setminus Prt(V\setminus V')}) \\
  & V' = \set{v\in V \mid ctrl(v)\notin \mathcal{C}\cup \mathcal{M}} \qquad
    E' = E \setminus \set{link((v,i))\in E\mid v\in V \setminus V'}
\end{align*}
% Notice that this functor forgets the membrane and the mobility
% controls, whilst controls triggering membrane interactions must be
% preserved when mapped to a protein solution.
%, in order to validate their interaction.

On the other hand, the \emph{mobility functor} $ \Fm\colon
\cat{BioBg}(\biosig(\mathcal{P},\mathcal{A})) \to \cat{MBg}$ deletes
all proteins present in the systems and focuses only on the structural
and mobility information. Formally, let $\cat{MBg} \defeq
\cat{BioBg}(\mathcal{K}_{B}(\emptyset, \emptyset))$ be the category of ``pure
mobility biobigraphs'', and define $\Fm$ as follows
\begin{align*}
  & \Fm(\langle n,X\rangle) = \langle n,X\rangle \\
  & \Fm((V,E,ctrl,edge,prnt,link)) = 
    (V', E', ctrl|_{V'}, edge|_{E'}, prnt|_{V'},
     link|_{dom(link)\setminus P(V\setminus V')}) \\
  & V'=\set{v\in V \mid ctrl(v)\notin \mathcal{P}\cup\mathcal{A}} \qquad
    E'=E\setminus \set{link((v,i))\in E\mid v\in V \setminus V'}
\end{align*}

These functors allows to give a formal justification of membrane-level
actions in terms of simpler interactions, as it is formalized in the
next result.
\begin{theorem}\label{prop:projections}
  Let $\mathcal{D}$ be a BioRS over
  $\cat{BioBg}(\biosig(\mathcal{P},\mathcal{A}))$, with rules
  $\mathcal{R}$.
  Let $\Fp(\mathcal{D})$ be the protein reactive system
  over $\cat{Lgp}(\mathcal{P}\cup\mathcal{A})$ whose rules are
  $\Fp(\mathcal{R}\setminus \{pinch, fuse\})$.
  Let $\Fm(\mathcal{D})$ be the BioRS over $\cat{MBg}$
  whose rules are $\Fm(\mathcal{R})$.

  For all $G,H$ ground bigraphs in
  $\cat{BioBg}(\biosig(\mathcal{P},\mathcal{A}))$, if
  $G\rightarrowtriangle H$ in $\mathcal{D}$ then
  \begin{itemize}
  \item either $\Fp(G) \neq \Fp(H)$ or
    $\Fm(G) \neq \Fm(H)$;
  \item if $\Fp(G) \neq \Fp(H)$ then
    $\Fp(G) \rightarrowtriangle \Fp(H)$ in
    $\Fp(\mathcal{D})$;
  \item if $\Fm(G) \neq \Fm(H)$ then
    $\Fm(G) \rightarrowtriangle \Fm(H)$ in
    $\Fm(\mathcal{D})$.
  \end{itemize}
\end{theorem}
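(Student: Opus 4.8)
The plan is to analyze a single reaction step $G \rightarrowtriangle H$ in $\mathcal{D}$ by unpacking its definition: there exist an active context $C$, a rule $L \rightarrowtriangle R \in \mathcal{R}_B$, and a ground parameter $d$ with $G = C \circ L \circ d$ and $H = C \circ R \circ d$. Since $\mathcal{R}_B = \set{\mathsf{pinch},\mathsf{fuse}} \cup \mathcal{R}$, I would split into cases according to which rule fires. The key observation to establish first is that the functors $\Fp$ and $\Fm$ are \emph{compositional} on the relevant fragments, i.e.\ they distribute over $\circ$, $\otimes$ and $\parallel$ up to the obvious identities on objects; this follows from their definitions by restriction of the node, edge, parent and link maps, and it is what lets me push the functors through the decomposition $C \circ \bullet \circ d$.

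\textbf{Protein rules.} If $L \rightarrowtriangle R$ is a protein or introduction rule, I would compute $\Fp(G)$ and $\Fp(H)$. For a genuine protein rule (monotone/antimonotone on $\cat{Lgp}$), $\Fp$ deletes exactly the membrane and mobility controls, which the rule leaves untouched, so $\Fp(G) = \Fp(C) \circ \Fp(L) \circ \Fp(d)$ and likewise for $H$, and $\Fp(L) \rightarrowtriangle \Fp(R)$ is precisely the projected rule in $\Fp(\mathcal{R} \setminus \set{pinch,fuse})$; hence $\Fp(G) \rightarrowtriangle \Fp(H)$, giving the second bullet. For an introduction rule, $L$ and $R$ differ only by \emph{one} occurrence of a mobility triple ($\pcarry,\pmemb,\pdir$ or $\fcell,\fmemb,\fdir$), all of whose controls lie in $\mathcal{M}$; since $\Fp$ erases all of $\mathcal{M}$, I expect $\Fp(L) = \Fp(R)$, hence $\Fp(G) = \Fp(H)$, while $\Fm$ retains exactly those controls and so $\Fm(G) \neq \Fm(H)$, a genuine $\Fm$-step.

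\textbf{Mobility rules.} If $L \rightarrowtriangle R$ is $\mathsf{pinch}$ or $\mathsf{fuse}$, these rules contain no protein nodes in their redex/reactum proper (they rearrange membrane layers and consume mobility triples), so $\Fp$ should collapse $L$ and $R$ to the same protein link graph, giving $\Fp(G) = \Fp(H)$, whereas $\Fm$ preserves the membrane/mobility structure and records the reconfiguration, so $\Fm(G) \rightarrowtriangle \Fm(H)$ in $\Fm(\mathcal{D})$, yielding the third bullet. Collecting the cases, every reaction is seen by exactly one of the two functors, which proves the first (disjunctive) bullet.

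The main obstacle I anticipate is \emph{not} the case analysis itself but checking that the functors genuinely commute with composition in the presence of the sorting conditions, and in particular that the restricted link maps $link|_{dom(link)\setminus Prt(V\setminus V')}$ remain well-defined morphisms in the target category after projection. Concretely, I must verify that when $\Fp$ deletes a mobility or membrane node, no \emph{protein} port was linked to it---this is exactly what \ref{eq:mob} guarantees (a mobility node never peers with a protein node), so deleting those nodes and their incident edges $E \setminus \set{link((v,i)) \mid v \in V\setminus V'}$ does not sever any protein bond and the result still satisfies \ref{eq:protsol}. Dually for $\Fm$, the polarity and layering conditions must survive erasure of proteins. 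Once this well-definedness and compositionality lemma is in hand, the three bullets follow by the routine case distinction sketched above.
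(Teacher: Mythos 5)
The paper states this theorem \emph{without} any proof --- unlike the neighbouring results, which at least get one-line sketches, Theorem~\ref{prop:projections} is followed only by the intuitive gloss about ``projecting'' execution traces --- so there is no official argument to compare yours against; what you wrote is the natural completion of what the authors left implicit, and it is sound in its overall structure. Your two load-bearing ingredients are exactly the right ones: the case split on which rule of $\mathcal{R}_B = \set{\mathsf{pinch},\mathsf{fuse}} \cup \mathcal{R}$ fires in $G = C \circ L \circ d \rightarrowtriangle C \circ R \circ d = H$, and the compositionality/well-definedness lemma for $\Fp$ and $\Fm$, whose crux you correctly locate in the sorting \ref{eq:mob}: since a mobility node never peers with a protein node (and membrane controls have arity $0$), deleting membrane/mobility nodes together with their incident edges severs no protein bond, so $\Fp$ distributes over the decomposition and \ref{eq:protsol} survives; dually for $\Fm$. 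You are also right that for introduction rules $\Fp(L) = \Fp(R)$ (the sides differ by one closed mobility triple), so such a step is seen only by $\Fm$, and that $\mathsf{pinch}/\mathsf{fuse}$ contain no protein nodes, so they are seen only by $\Fm$ as well.

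Three small points would tighten the write-up. First, your claim that ``every reaction is seen by exactly one of the two functors'' needs the rules to be non-trivial: the growing relation's \textsf{(empty)} and \textsf{(tens)} clauses formally derive $\omega, S \grow \omega, S$, so a degenerate ``monotone'' rule with $L = R$ would falsify the first bullet; you should either exclude identity rules or observe that each of \textsf{(hide)}, \textsf{(reveal)}, \textsf{(tie)}, \textsf{(synth)} strictly changes edge types, edge count, or node count, quantities preserved under composition with $C$ and $d$, which is also what upgrades $\Fp(L) \neq \Fp(R)$ to $\Fp(G) \neq \Fp(H)$. Second, for the projected step to exist you must check that the projected context is \emph{active}: this is free on the $\Fp$ side (all link-graph contexts are active), and on the $\Fm$ side it follows because protein controls are atomic, hence leaves of the place graph, so erasing them preserves every site's ancestor chain --- worth one explicit line. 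Third, note that $\Fp(\mathcal{R}\setminus\set{pinch,fuse})$ also contains the $\Fp$-images of introduction rules, which are identity rules; this is harmless for the bullets (the conditional in the second bullet is then vacuous), but your phrase ``$\Fp(L) \rightarrowtriangle \Fp(R)$ is precisely the projected rule'' should be restricted to genuine protein rules. Finally, be aware that the well-definedness issue you flag is real but runs slightly deeper than you state for $\Fm$ on \emph{contexts}: an inner name of $C$ wired to an edge carrying protein ports would dangle after erasure; for the theorem this is avoided because $G, H, d$ are ground and the compositionality lemma only needs the restricted maps on the actual decomposition at hand, but a fully general functoriality claim for $\Fm$ would need either to keep such edges or to refine the definition --- a rough edge inherited from the paper's own definitions, not introduced by you.
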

Intuitively, this theorem states that the protein and mobility
functors ``project'' the execution traces of a biobigraphical system
into $\cat{Lgp}$ and $\cat{MBg}$, respectively, as shown in the
diagram in Figure~\ref{fig:diagram}. In the first case, only the
interactions between proteins are observed, but not their structural
effects; in the latter case, the evolution of mobility controls and
positions of cell is observed, but not the protein interactions which
actually cause these dynamics.

\begin{figure}[t]
  \centering
  \begin{tikzpicture}[scale=0.9]\input{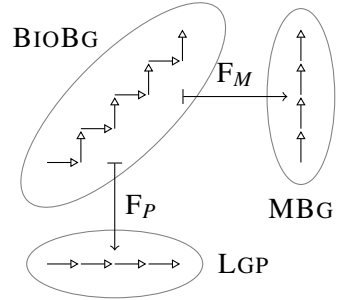}\end{tikzpicture}
  \caption{Mapping biobigraphical steps to either protein or
    mobility reactions.}
  \label{fig:diagram}
\end{figure}

\section{Examples}\label{sec:examples}

\subsection{Vesicles formation}\label{sec:vesicle}

In this section we give a formal bigraphical description of the
vesiculation process %, as in the \emph{endocytic pathway},
which is one of the most fundamental transport mechanisms in cells.

Vesicle transport is used to translocate proteins and membranes from
the endoplasmatic reticulum (ER) to the plasma membrane via the Golgi
apparatus, or to ingest macromolecules by receptor-mediated
endocytosis at the plasma membrane.  Vesicles form by budding from
membranes. Each bud has distinctive coat protein on cytosol surface,
which shapes the membrane to form a bubble. The bud captures the
correct molecules for outward transport by a selective bond with a
cargo receptor attached through the double layer to the coat protein
complex, which is lost after the budding completes. Here we consider
the endocytosis at the plasma membrane where clathrin coated vesicles
are formed.

In order to represent this vesiculation process, we introduce the
protein signature 
\[
 \mathcal{P}_v = \set{ \mathsf{cargo} \colon 1,
  \mathsf{rec}^{cys} \colon 2,\mathsf{rec}^{ext} \colon 2,
  \mathsf{adpt} \colon 2, \mathsf{clath} \colon 1 }
\quad \text{and}\quad
 \mathcal{A}_v = \set{ \mathsf{rec}^{m} \colon 2 }
\]
and finally the corresponding reaction rules for their interactions
(graphically depicted in Figure~\ref{fig:vesiclerules}):
\begin{align*}
  &
  (/x,y{:\,}\mathsf{v} \otimes /z,w,k{:\,}\mathsf{h}) \circ 
  ((\mathsf{cargo}_{x} \mid \mathsf{rec}^{ext}_{yz}) \parallel
  \mathsf{rec}^{m}_{zw} \parallel \mathsf{rec}^{cys}_{wk})
  \xrightarrowtriangle{} \tag{\textsf{rec}} \\
  & \phantom{(/x,y{:\,}\mathsf{v} \otimes /z,w,k{:\,}\mathsf{h}) \circ ((}
  (/k{:\,}\mathsf{v} \otimes /b, z,w{:\,}\mathsf{h}) \circ
  (\mathsf{cargo}_{b} \mid \mathsf{rec}^{ext}_{bz}) \parallel
  \mathsf{rec}^{m}_{zw} \parallel \mathsf{cys}^{cys}_{wk})
  \\
  &
  (/y{:\,}\mathsf{h}\otimes /x,k{:\,}\mathsf{v}) \circ
  (\mathsf{rec}^{cys}_{wk} \mid \mathsf{adpt}_{xy}) 
  \xrightarrowtriangle{}  \tag{\textsf{adpt}}
  (/y{:\,}\mathsf{v} \otimes /b{:\,}\mathsf{h}) \circ
  (\mathsf{rec}^{cys}_{wb} \mid \mathsf{adpt}_{by})
  \\
  &
  /yz{:\,}\mathsf{v} \circ (\mathsf{adpt}_{xy} \mid \mathsf{clath}_{z}) 
  \xrightarrowtriangle{} \tag{\textsf{coat}}
  /b{:\,}\mathsf{h} \circ (\mathsf{adpt}_{xb} \mid \mathsf{clath}_{b})
  \\
  &
  /b,b'{:\,}\mathsf{h} {\circ} (\mathsf{rec}^{cys}_{wb} \mid
  \mathsf{adpt}_{bb'} \mid \mathsf{clath}_{b'}) 
  \xrightarrowtriangle{} \tag{\textsf{uncoat}}
  (/x,k{:\,}\mathsf{v} \otimes /b{:\,}\mathsf{h}) {\circ}
  (\mathsf{rec}^{cys}_{wk} \mid \mathsf{adpt}_{xb'} \mid \mathsf{clath}_{b'})
\end{align*}

\begin{figure}
\centering
  \def\width{2.3}
  \pgfdeclarelayer{background} 
  \pgfsetlayers{background,main}
  \begin{tikzpicture}[%scale=2, every node/.append style={scale=2},
  	memb/.style={thick, double distance= 0.23*\width cm, 
		draw= yellow!60!gray, double=yellow!50},
  	receptor/.style={fill=blue!65},
	cargo/.style={fill=red!95},
	clathrin/.style={fill=green!75!blue},
	adaptin/.style={fill=orange},
	pinch/.style={thick, draw=purple}
  ]\input{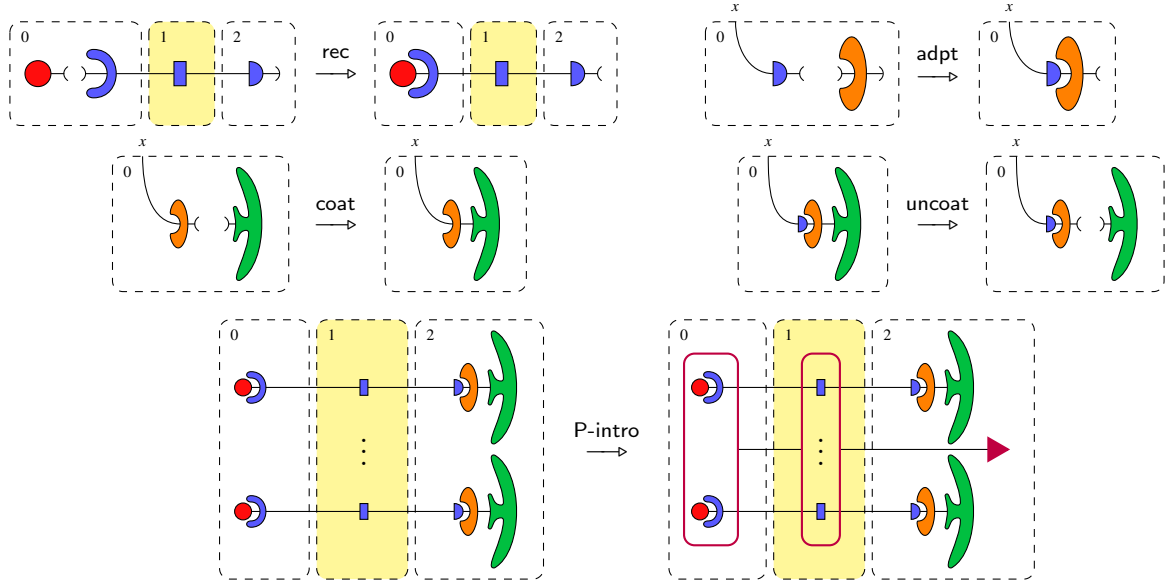}\end{tikzpicture}
  \caption{Rules for vesicles formation.}
  \label{fig:vesiclerules}
\end{figure}

The vesicle is spawn when the correct number of
clathrin-adaptin-receptor-cargo molecule complexes is formed.  Let
$(\textit{P-cmplx})^{n}$ be the bigraph representing $n$ complexes,
defined as:
\begin{align*}
  \textit{P-cmplx} & \defeq 
  /x,y,z,w,k{:\,}\mathsf{h} \circ 
  (
  (\mathsf{cargo}_{x} \mid \mathsf{rec}^{ext}_{xy} \mid id_{1}) \parallel
  (\mathsf{rec}^{m}_{yz} \mid id_{1}) \parallel
  (\mathsf{rec}^{cys}_{zw} \mid \mathsf{adpt}_{wk} \mid \mathsf{clath}_{k}
  \mid id_{1})
  )
  \\
  (\textit{P-cmplx})^0 & \defeq (1 \parallel 1\parallel 1) \qquad\qquad
  (\textit{P-cmplx})^{n+1} \defeq \textit{P-cmplx}\circ (\textit{P-cmplx})^n
\end{align*}
Then, the actual creation of membranes is formalized by a single
mobility introduction rule, which is triggered in presence of the
right number of complexes:
\begin{equation*}
  (\textit{P-cmplx})^{n} \xrightarrowtriangle{} \tag{\textsf{P-intro}}
  /x,y{:\,}\mathsf{h} \circ
  (\pcarry_{x} \parallel \pmemb_{xy} \parallel \pdir_{y}) \circ
  (\textit{P-cmplx})^{n}
\end{equation*}
After the application of this rule, the vesiculation process is
completed by the \textsf{pinch} rule.  An execution trace of this
process is graphically depicted in Fig.~\ref{fig:vesicle}.
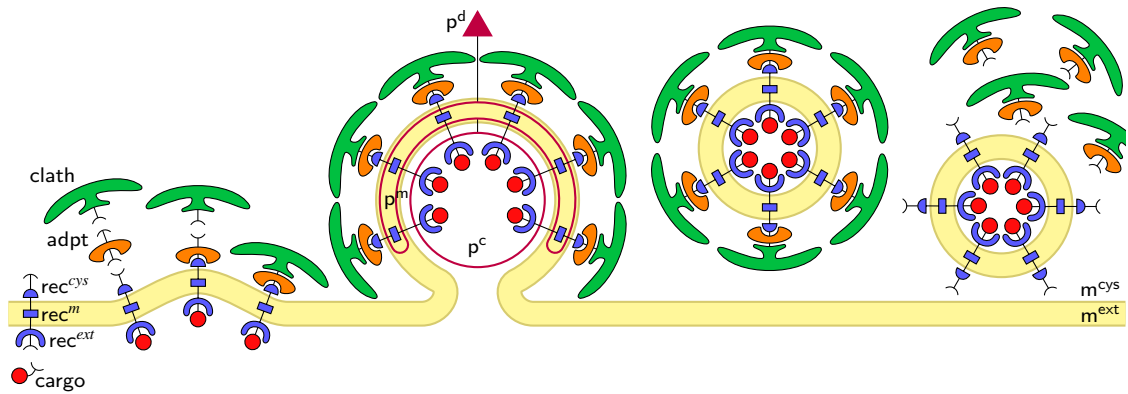
\begin{figure}[t]
  \centering
  \def\width{1.25}
  \begin{tikzpicture}[scale=1.1,
  	memb/.style={thick, double distance= 0.23*\width cm, draw= yellow!60!gray, double=yellow!50},
  	receptor/.style={fill=blue!65},
	cargo/.style={fill=red!95},
	clathrin/.style={fill=green!75!blue},
	adaptin/.style={fill=orange},
	pinch/.style={thick, draw=purple}
  ]%% vesicleTransport.tex

%% \width = ``larghezza di una proteina di clatrina'' 
%%		     (utile per definire il lato dell'esagono della vescicola rivestita)

%\draw[help lines, step=1, ultra thin] (-7,-3) grid (6.5,3);
\clip (-7,-3) rectangle (6.5,2);
	
	%% membrana
	\draw 
		(-7,-2) coordinate (membStart)
		($(membStart) + (right:13.5)$) coordinate (membEnd)
		($(membStart)!0.37!(membEnd)$) coordinate (budStart)
		($(membStart)!0.47!(membEnd)$) coordinate (budEnd);

	\draw[memb] ($(membStart)!0.42!(membEnd) + 1.1*(up:\width)$) coordinate (perno2)
		($(perno2)!{sin(60)*\width} cm!(budStart)$) coordinate (broken1)
		($(perno2)!{sin(60)*\width} cm!(budEnd)$) coordinate (broken2);
	\draw[memb]
		[rounded corners]
		(membStart) -- ($(membStart)!0.1!(membEnd)$) ..
		controls ($(membStart)!0.17!(membEnd) + (up:0.5)$)
		.. ($(membStart)!0.24!(membEnd)$)
			coordinate[pos=0.05] (r1)
			coordinate[pos=0.5] (r2)
			coordinate[pos=0.95] (r3)
		%% bolla
		[sharp corners]
		-- (budStart)
		.. controls +($0.2*(right:\width)$) and ($(broken1)!{0.2*sin(60)*\width}cm! -90:(perno2)$) ..
		(broken1)
		.. controls ($(broken1)!{0.2*sin(60)*\width}cm! 90:(perno2)$) and +($0.5*sin(60)*(down:\width)$) ..
		($(perno2) + sin(60)*(left:\width)$) 
		.. controls +($0.555*sin(60)*(up:\width)$) and +($0.555*sin(60)*(left:\width)$) ..
		($(perno2) + sin(60)*(up:\width)$)
		.. controls +($0.555*sin(60)*(right:\width)$) and +($0.555*sin(60)*(up:\width)$) ..
		($(perno2) + sin(60)*(right:\width)$) 
		.. controls +($0.5*sin(60)*(down:\width)$) and ($(broken2)!{0.2*sin(60)*\width}cm! -90:(perno2)$) ..
		(broken2)
		.. controls  ($(broken2)!{0.2*sin(60)*\width}cm! 90:(perno2)$) and +($0.2*(left:\width)$) ..
		(budEnd) 
		
		-- (membEnd);
	\draw (membEnd)+(left:0.3) node {\scriptsize $\mext$};
	\draw (membEnd)+(-0.3,0.3) node {\scriptsize $\mcis$};
	
	%% ricezione della molecola di carico
	\draw[receptor] ($(membStart)!0.02!(membEnd)$) coordinate (rec)
		\receptor{1}{0.35*\width}{recNorth}{recSouth};
	\draw ($(membStart)!0.01!(membEnd) + (down:0.75)$) coordinate (c)
		[cargo, rotate around={-90:(c)}] \cargo{0.075*\width}{cargoNorth};
	\draw[-(] (cargoNorth) to[out=0, in=-120] ($(cargoNorth) + 0.1*(up:\width) + 0.1*(right:\width)$);
	\draw[-(] (recSouth) -- +($0.15*(down:\width)$);
	\draw[-)] (recNorth) -- +($0.15*(up:\width)$);
	
	\draw (recNorth)+(up:0.05) node[right] {\scriptsize $\mathsf{rec}^{cys}$};
	\draw (recSouth)+(0.1,-0.05) node[right] {\scriptsize $\mathsf{rec}^{ext}$};
	\draw (rec) node[right] {\scriptsize $\mathsf{rec}^{m}$};
	\draw (cargoNorth)+(0.4,-0.1) node {\scriptsize $\mathsf{cargo}$};
	
	%% ricezione adaptina
	\begin{scope}[rotate around={20:(r1)}]
		\draw[receptor] (r1) \receptor{1}{0.35*\width}{recNorth}{recSouth};
		\draw[cargo] ($(recSouth)+0.15*(down:\width)$) \cargo{0.075*\width}{cargoNorth};
		\draw[adaptin] ($(recNorth) + 0.4*(up:\width)$) \adaptin{0.35*\width}{adptNorth}{adptSouth};
		\draw[clathrin] ($(adptNorth) + 0.4*(up:\width)$) \clathrin{\width}{clathSouth};
		
		\draw (recSouth) -- (cargoNorth);
		\draw[-(] (recNorth) -- +($0.15*(up:\width)$);
		\draw[-(] (adptSouth) -- +($0.15*(down:\width)$);
		\draw[-)] (adptNorth) -- +($0.13*(up:\width)$);
		\draw[-(] (clathSouth) -- +($0.15*(down:\width)$);
	\end{scope}
	\draw (adptSouth)+(-0.5,0.1) node {\scriptsize $\mathsf{adpt}$};
	\draw (clathSouth)+(-0.5,0.4) node {\scriptsize $\mathsf{clath}$};
	
	%% ricezione clatrina
	\draw[receptor] (r2) \receptor{1}{0.35*\width}{recNorth}{recSouth};
	\draw[cargo] ($(recSouth)+0.15*(down:\width)$) \cargo{0.075*\width}{cargoNorth};
	\draw[adaptin] ($(recNorth) + 0.07*(up:\width)$) \adaptin{0.35*\width}{adptNorth}{adptSouth};
	\draw[clathrin] ($(adptNorth) + 0.4*(up:\width)$) \clathrin{\width}{clathSouth};
		
	\draw (recSouth) -- (cargoNorth);
	\draw (recNorth) -- (adptSouth);
	\draw[-(] (adptNorth) -- +($0.13*(up:\width)$);
	\draw[-(] (clathSouth) -- +($0.15*(down:\width)$);
	
	%% stato finale della copertura
	\begin{scope}[rotate around={-20:(r3)}]
		\draw[receptor] (r3) \receptor{1}{0.35*\width}{recNorth}{recSouth};
		\draw[cargo] ($(recSouth)+0.15*(down:\width)$) \cargo{0.075*\width}{cargoNorth};
		\draw[adaptin] ($(recNorth) + 0.07*(up:\width)$) \adaptin{0.35*\width}{adptNorth}{adptSouth};
		\draw[clathrin] ($(adptNorth) + 0.07*(up:\width)$) \clathrin{\width}{clathSouth};
		
		\draw (recSouth) -- (cargoNorth);
		\draw (recNorth) -- (adptSouth);
		\draw (adptNorth) -- (clathSouth);
	\end{scope}
	
	%% introduzione dei controlli di mobilitˆ
	\def\pmRadius{sin(60)*\width}
	\draw (perno2) -- ($(perno2) + 2.1*\pmRadius*(up:1)$);
	\draw[pinch, fill=white] (perno2) circle ({0.75*\pmRadius} cm);
	
	\draw[pinch, double distance=0.15*\width cm, double=yellow!50, cap=round]
		($(perno2)!1!-210:($(perno2) + \pmRadius*(left:1)$)$)
		arc (-30:210: {\pmRadius} cm);
	
	\def\pdLato{0.3*\pmRadius}
	\draw[pinch, fill=purple]
		($(perno2) + 2.1*\pmRadius*(up:1)$) coordinate (pdStart) --
		($ (pdStart)! {\pdLato} cm! -30: ($(pdStart) + (down:1)$) $) --
		($ (pdStart)! {\pdLato} cm! 30: ($(pdStart) + (down:1)$) $)
		-- cycle;
	\draw ($(perno2) + 0.75*\pmRadius*(down:1)$) node[above] {\scriptsize $\pcarry$};
	\draw ($(perno2) + \pmRadius*0.9*(left:1)$) node {\scriptsize $\pmemb$};
	\draw ($(perno2) + \pmRadius*2*(up:1)$) node[left] {\scriptsize $\pdir$};
	
	\foreach \angle in {-112.5, -67.5, ..., 112.5} {
		\begin{scope}[rotate around={\angle:(perno2)}]
		\draw[receptor] ($(perno2) + sin(60)*(up:\width)$) \receptor{1.5}{0.35*\width}{recNorth}{recSouth};
		\draw[cargo] ($(recSouth)+0.15*(down:\width)$) \cargo{0.075*\width}{cargoNorth};
		\draw[adaptin] ($(recNorth) + 0.07*(up:\width)$) \adaptin{0.35*\width}{adptNorth}{adptSouth};
		\draw[clathrin] ($(adptNorth) + 0.07*(up:\width)$) \clathrin{\width}{clathSouth};
		
		\draw (recSouth) -- (cargoNorth);
		\draw (recNorth) -- (adptSouth);
		\draw (adptNorth) -- (clathSouth);
		\end{scope}
	}
	
	%% vescicola rivestita
	\draw (2.2,0) coordinate (perno);
	\draw[memb] (perno) circle ({sin(60)*\width*0.66} cm);
	
	\foreach \angle in {0, 60, ..., 300} {
		\begin{scope}[rotate around={\angle:(perno)}]
		\draw[receptor] ($(perno) + sin(60)*0.66*(up:\width)$) \receptor{1}{0.35*\width}{recNorth}{recSouth};
		\draw[cargo] ($(recSouth)+0.15*(down:\width)$) \cargo{0.075*\width}{cargoNorth};
		\draw[adaptin] ($(recNorth) + 0.07*(up:\width)$) \adaptin{0.35*\width}{adptNorth}{adptSouth};
		\draw[clathrin] ($(adptNorth) + 0.07*(up:\width)$) \clathrin{\width}{clathSouth};
		
		\draw (recSouth) -- (cargoNorth);
		\draw (recNorth) -- (adptSouth);
		\draw (adptNorth) -- (clathSouth);
		\end{scope}
	}
	
	%% vescicola nuda
	\draw (5,-0.7) coordinate (perno);
	\draw[memb] (perno) circle ({sin(60)*\width*0.66} cm);
	
	\foreach \angle in {30, 90, ..., 330} {
		\begin{scope}[rotate around={\angle:(perno)}]
		\draw[receptor] ($(perno) + sin(60)*0.66*(up:\width)$) \receptor{1}{0.35*\width}{recNorth}{recSouth};
		\draw[cargo] ($(recSouth)+0.15*(down:\width)$) \cargo{0.075*\width}{cargoNorth};
		
		\draw (recSouth) -- (cargoNorth);
		\draw[-(] (recNorth) -- +($0.15*(up:\width)$);
		\end{scope}
	}
	
	\foreach \x / \y / \angle in {-0.3/0/30, 0.2/-0.7/10, 1/-0.2/-50, 1.3/-1.3/-35} {
		\draw ($(perno) + (up:2) + (\x, \y)$) coordinate (pRnd);
		
		\begin{scope}[rotate around={\angle:(pRnd)}]
		\draw[adaptin] ($(pRnd) + 0.07*(down:\width)$) \adaptin{0.35*\width}{adptNorth}{adptSouth};
		\draw[clathrin] ($(adptNorth) + 0.07*(up:\width)$) \clathrin{\width}{clathSouth};
		
		\draw[-(] (adptSouth) -- +($0.15*(down:\width)$);
		\draw (adptNorth) -- (clathSouth);
		\end{scope}
	}\end{tikzpicture}
  \caption{Receptor-mediated endocytosis pathway (compare with \cite[Fig.~15.19]{alberts:ecb}).}
%\vspace{-2ex}
  \label{fig:vesicle}
\end{figure}

\subsection{Fc receptor-mediated phagocytosis}\label{sec:phago}

Phagocytosis is the process whereby cells engulf large particles.
%, usually over $0.5\, \mu m$ in diameter.
Phagocytosis is triggered by the interaction of antibodies that cover
the particle to be internalized with specific receptors on the surface
of the phagocyte. Here we present a simplified example of signaling
transduction pathway for FcR-mediated phagocytosis taken from
\cite{ggc:phago}. Fc receptors bind to the Fc portion of
immunoglobulins (IgG), which trigger cross-linking of FcR and, hence,
the activation of their enzymatic sub-units. This initiates a variety
of signals which lead through the reorganization of the actin
cytoskeleton, and membrane remodeling, to the formation of the
phagosome. Here we focus mainly on the modeling criteria by which we
can encode as a BioRS a phagocytosis reaction using introduction and
commitment rules. Due to lack of space we show only the mobility
introduction rule; the design of protein rules is similar to that in
Section\ref{sec:vesicle}, and their definition can be easily recovered
by the signaling pathway shown in Fig.~\ref{fig:phago}.

\begin{figure}[t]
  \centering
   \includegraphics[width=0.85\textwidth]{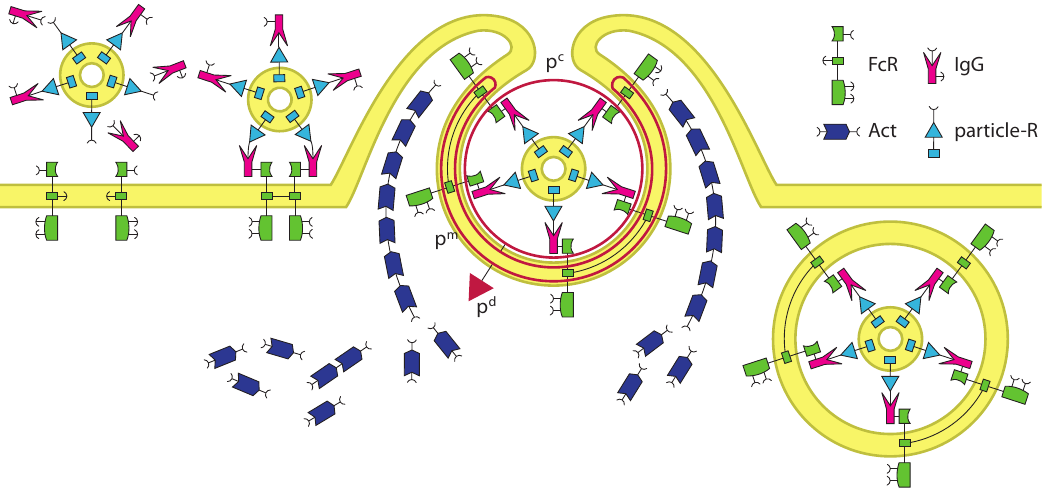}
   \includegraphics[width=0.75\textwidth]{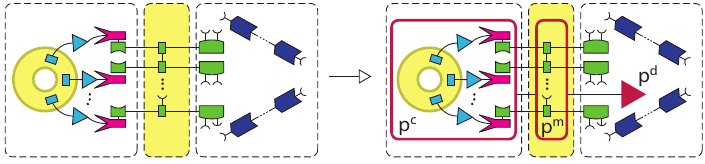}
  \caption{Fc receptor-mediated phagocytosis of an IgG-opsonized particle.}
  \label{fig:phago}
\end{figure}
 
\vspace{-1ex}

\section{Conclusions}\label{sec:concl}
\vspace{-.9ex}
In this paper we have presented a framework for protein and membrane
interactions, based on Milner's bigraphs.  Assuming the protein-level
interactions as the fundamental machine, we have characterized
formally a class of link graphs and their reactive systems which
correspond exactly to protein solutions and protein transition systems
definable in the $\kappa$-calculus \cite{dl:kappa04}.  Then we have
extended our approach to the membrane machine, whose mobility activity
can be rendered just by two fixed rules. As example applications of
this framework, we have modelled the formation of vesicles, and the Fc
receptor-mediated phagocytosis, giving a formal justification at
protein-level of membrane-level actions.

 % (the first to our known) tentative
% for unifying different biological aspects, such as protein
% interactions and mobility of cells. To this end, we have used the
% well-know framework of bigraphs, which can describe hyper-graph with
% possibly nested nodes and seems apt to deal with both protein and
% membrane aspects.

%\looseness=-1
%

%\smallskip

\paragraph{Related and future work.}
The idea of representing biological systems using bigraphs goes back
to Cardelli and Milner \cite{cardelli04:bioware,milner:ic06}. To our
knowledge our framework is the first ``foundational metamodel'',
adequate at the protein-level, and allowing to describe formally the
relations between biological models at different abstraction levels,
according to the ``tower of models'' vision.

%\looseness=-1
Damgaard \emph{et al.}  have developed the $\mathcal{C}$-calculus
\cite{ddk08:langcell}, an interesting language for modelling
interactions of membranes and proteins.  There are many relevant
differences between biobigraphs and the $\mathcal{C}$-calculus.
First, the $\mathcal{C}$-calculus is focused on protein diffusion
through ``channels'', and it does not allow for phagocytosis as the
one described using biobigraphs in Section~\ref{sec:phago}.  Moreover,
one of the aim of biobigraphs is to explain higher-level interactions
with lower-level reactions, according to the ``tower of models''
vision, and to be adequate with respect to the
$\kappa$-calculus. According to this vision, in biobigraphs membrane
reconfigurations are logically diffent from protein reactions, but the
connection between the two aspects is formally specified.  In the
$\mathcal{C}$-calculus, mobility and protein reactions are more
intertwined, and a formal connection between $\mathcal{C}$-calculus
and $\kappa$-calculus has not been investigated yet.

% Although the $\mathcal{C}$-language has a bigraphical
% semantics, biologically meaningless bigraphs are not ruled out in the
% model (as we have done with suitable sortings), but only at the
% syntactic level.  This hinders the applicability of the theory of
% bigraphs to the $\mathcal{C}$-calculus.

In fact, we can say that our approach is \emph{model-driven}, that is,
we have striven for an adequate (bigraphical) model \emph{before}
defining any language.
% while \cite{ddk08:langcell} approach follows the opposite direction, i.e., the
% language comes before the model.
In fact, in \cite{bgm:biobeta} we present Bio$\beta$, a new
(meta-)language inspired and corresponding to the bigraphical
framework presented in this paper, and covering both protein and
membrane aspects.  This language can be given an efficient, \emph{ad
  hoc} implementation.

% Nevertheless, it may be interesting to
% develop a language in the $\mathcal{B}^{\Sigma,\mathcal{R}}$ family
% \cite{dk08:genlang} (which the $\mathcal{C}$-calculus belongs to)
% adequate to represent faithfully our biobigraphs.

A stochastic version of bigraphical reactive systems has been introduced in
\cite{kmt:mfps08}, for which systems biology is indicated as the primary
application.  Our work is complementary, and orthogonal, to
\cite{kmt:mfps08};
% we present a carefully trimmed notion of bigraphical reactive
% systems, suited for representing biological systems.
indeed, we can easily think of ``stochastic biobigraphs'', endowing
rules with rates, for dealing also with quantitative aspects, like
reaction speeds, concentration ratios, etc., and eventually also the
gene machine, in the line of recent approaches
\cite{cardelli07:prme,ch:psasb}.

% In fact, we
% think that many causal dependencies, like the prefix constructor in
% high-level (e.g. on membranes) calculi, are actually representable as
% ``reactions with overwhelming probability''. 

Another interesting aspect to investigate is the bisimilarity
definable via the IPO construction on bigraphs \cite{milner:ic06},
which is always a congruence. This can be useful e.g. in
\emph{synthetic biology}, for instance, to verify if a synthetic
protein (or even a subsystem to be implanted) behaves as the natural
one.
%  Or even to
% check compatibility among different kind of biological tissues, i.e.,
% if they are compatible to each other, this is important in transplant
% operations.

\vspace{-0.7ex}

\bibliographystyle{abbrv}
\bibliography{allbib,pi}

%\newpage

\begin{extended}

\appendix

\section{BiLog: a spatial logic for bigraphs}\label{sec:bilog}
	
In this section we recall the main definitions and results about
BiLog. For a detailed survey see \cite{cms:icalp05}.

BiLog is a logic for describing the structure and sub-structure of
bigraphs. The models of this logic are formed by \emph{terms}: the
elementary terms ($\Omega$) are defined by providing a set of unary
constructors ($\Theta$), then the more complex terms are constructed
by composition ($\circ$) and tensor product ($\otimes$).
\begin{align}
  \Omega & ::= id_{I} \mid \dots
  \tag{for every $\Omega\in \Theta$}
  \\
  A, B & ::= \mathbf{F} \mid A \Rightarrow B \mid id \mid \Omega \mid
  A \otimes B \mid A \circ B \mid A \leftcompadj B \mid A \lefttensoradj B
  \nonumber
\end{align}
Terms are considered up to the following structural congruence $\equiv$:
\begin{gather*}
  G \equiv G \qquad G \equiv F \Rightarrow F \equiv G \qquad 
  G \equiv G' \wedge F \equiv F' \Rightarrow G \circ F \equiv G' \circ F'
  \\
  G \equiv F \wedge F \equiv H \Rightarrow G \equiv H \qquad 
  G \equiv G' \wedge F \equiv F' \Rightarrow G \otimes F \equiv G' \otimes F'
  \\
  G \circ id_{I} \equiv id_{J} \circ G \qquad
  (G \circ F) \circ H \equiv G \circ (F \circ H)
  \\
  id_{I} \otimes id_{j} \equiv id_{I \otimes J} \qquad
  (G \otimes F) \otimes H \equiv G \otimes (F \otimes H) \qquad
  G \otimes id_{\epsilon} \equiv G \equiv id_{\epsilon} \otimes G
  \\
  (G_{0} \otimes F_{0}) \circ (G_{1} \otimes F_{1}) \equiv (G_{0}
  \circ G_{1}) \otimes (F_{0} \circ F_{1})
\end{gather*}
	
The logic semantics is defined as follows:
\begin{alignat*}{2}
  &G \vDash \mathbf{F} &{}\iff{}
  & \text{never} \\
  &G \vDash \Omega &{}\iff{}
  &G \equiv \Omega \\
  &G \vDash id &{}\iff{}
  &\exists I. \; G \equiv id_{I} \\
  &G \vDash A \Rightarrow B &{}\iff{}
  &\text{if } G \vDash A \text{ then } G \vDash B \\
  &G \vDash A \circ B &{}\iff{}
  &\exists G', G''. \
  G \equiv G' \circ G'' \wedge G' \vDash A \wedge G'' \vDash B \\
  &G \vDash A \otimes B &{}\iff{}
  &\exists G', G''. \;
  G \equiv G' \otimes G'' \wedge G' \vDash A \wedge G'' \vDash B \\
  &G \vDash A \leftcompadj B &{}\iff{}
  &\forall G'. \;
  \text{if } G' \vDash A \text{ and } (G' \circ G)
  \text{ is defined, then } G' \circ G \vDash B \\
  &G \vDash A \lefttensoradj B &{}\iff{}
  &\forall G'. \;
  \text{if } G' \vDash A \text{ and } (G' \otimes G)
  \text{ id defined, then } G' \otimes G \vDash B
\end{alignat*}

This kind of logic assigns to each constructor in $\Theta$ a logic
constant $\Omega$ whose meaning is simply to be congruent to
the corresponding constructor.  The formula for the horizontal
decomposition $A \otimes B$ is satisfied whenever the term can be
split using a tensor product in two sub-terms whose satisfies $A$ and
$B$, respectively.  Recall that the tensor product does not allow the
sharing of anything between two terms, this fact is quite important
when we will instantiate the logic to link graphs (as we will see
later on).  The formula for the vertical decomposition $A \circ B$ is
satisfied whenever the term ca be decomposed in an upper part and a
below one (that if composed give back the original term) whose
satisfies $A$ and $B$, respectively.  The adjoints for composition
($\leftcompadj$) and tensor ($\lefttensoradj$) are considered.
	
\begin{theorem}[Logic Equivalence]
  Let $G$ and $F$ be two terms.
  \begin{enumerate}
  \item  $G \vDash A$ and $G \equiv H$ then $F \vDash A$.
  \item $G =_{L} F$ iff $G \equiv F$;
  \end{enumerate}
  where $=_{L}$ is the equivalence relation induced by the
  equisatisfiability of two terms.
\end{theorem}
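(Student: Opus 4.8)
The plan is to prove the two parts in order, since part~2 will rest on part~1. (I note in passing that the printed statement of part~1 contains a variable-name slip: with $G,H$ the two congruent terms it should read ``if $G \vDash A$ and $G \equiv H$ then $H \vDash A$'', and this invariance of satisfaction under $\equiv$ is what I would establish.) The argument is a structural induction on the formula $A$. The base cases $\mathbf{F}$, $id$ and the constants $\Omega$ are immediate: $\mathbf{F}$ is never satisfied, while $G \vDash \Omega \iff G \equiv \Omega$ and $G \vDash id \iff \exists I.\, G \equiv id_{I}$ transfer to $H$ by transitivity of $\equiv$. The case $A \Rightarrow B$ follows from the induction hypotheses. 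For the spatial connectives $A \circ B$ and $A \otimes B$, any witnessing decomposition $G \equiv G' \circ G''$ (resp.\ $G \equiv G' \otimes G''$) is simultaneously a decomposition of $H$ by transitivity, so satisfaction transfers directly. The only delicate cases are the adjoints $\leftcompadj$ and $\lefttensoradj$: there I would use that $\equiv$ is a congruence for $\circ$ and $\otimes$, so that for any context $G'$ with $G' \circ G$ defined one has $G' \circ G \equiv G' \circ H$, and then invoke the induction hypothesis for $B$ to move satisfaction across.

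For part~2, the ``only if'' direction is an immediate corollary of part~1: if $G \equiv F$ then $G$ and $F$ satisfy exactly the same formulas, hence $G =_{L} F$. The substantive content is the converse (completeness), for which the key device is a \emph{characteristic formula}. To each term $t$ built from constructors in $\Theta$ via $\circ$ and $\otimes$ I would associate the formula $A_{t}$ obtained by reading $t$ itself as a formula, replacing each constructor by its logical constant $\Omega$ and keeping $\circ,\otimes$ as the homonymous connectives. The central lemma is that $H \vDash A_{t}$ iff $H \equiv t$, proved by induction on the structure of $t$: the constructor case is exactly the clause $H \vDash \Omega \iff H \equiv \Omega$; in the composite case $A_{t} = A_{t_{1}} \circ A_{t_{2}}$ one unfolds the semantics of $\circ$ to obtain witnesses $H', H''$ with $H \equiv H' \circ H''$, $H' \vDash A_{t_{1}}$, $H'' \vDash A_{t_{2}}$, applies the induction hypothesis to get $H' \equiv t_{1}$ and $H'' \equiv t_{2}$, and recombines by congruence to conclude $H \equiv t_{1} \circ t_{2} = t$ (the converse inclusion takes $H' = t_{1}$, $H'' = t_{2}$); the $\otimes$ case is symmetric.

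Granting this lemma, completeness is short. Suppose $G =_{L} F$. Since $G \vDash A_{G}$ (because $G \equiv G$) and $G,F$ are equisatisfiable, we get $F \vDash A_{G}$, whence the characteristic-formula lemma yields $F \equiv G$. This closes part~2.

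The main obstacle I anticipate is the characteristic-formula lemma, specifically the composite cases: one must verify that the existential decompositions handed out by the semantics of $\circ$ and $\otimes$ can always be reassembled into a genuine congruence $H \equiv t$, and that the reassembled composition is well defined. This relies essentially on $\equiv$ being a congruence for both operators together with the associativity and unit laws listed for $\circ$ and $\otimes$; these are exactly the structural equivalences recorded in the excerpt, so the reassembly is routine once those are invoked carefully.
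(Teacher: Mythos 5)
Your proposal is correct, and the paper itself offers no proof to compare against: the theorem is merely recalled in the appendix from the BiLog paper \cite{cms:icalp05}, where the standard argument is precisely yours --- invariance of satisfaction under $\equiv$ by induction on formulas (with the congruence property of $\equiv$ handling the adjoint cases), followed by completeness via reading each term $t$ as its own characteristic formula $A_t$ so that $G \vDash A_G$ and equisatisfiability force $F \equiv G$. Your correction of the $F$/$H$ variable slip in part~1 is also right.
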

	
Using the basic operators, we can define the following derived ones:
\begin{gather*}
  A_{I} \defeq A \circ id_{I} \qquad
  A_{\rightarrow J} \defeq id_{J} \circ A \qquad
  A_{I \rightarrow J} \defeq  (A_{I})_{\rightarrow J} \qquad
  A \circ_{I} B \defeq  A \circ id_{I} \circ B \\
  A \leftcompadj_{J} B \defeq A_{\rightarrow J} \leftcompadj B \qquad
  A \ominus B \defeq \neg(\neg A \otimes \neg B) \qquad
  A \bullet B \defeq \neg(\neg A \circ \neg B) \\
  A^{\exists\otimes} \defeq \mathbf{T} \otimes A \otimes \mathbf{T} \quad
  A^{\forall\otimes} \defeq \mathbf{F} \ominus A \ominus \mathbf{F} \quad
  A^{\exists\circ} \defeq \mathbf{T} \circ A \circ \mathbf{T} \quad
  A^{\forall\circ} \defeq \mathbf{F} \bullet A \bullet \mathbf{F} 
\end{gather*}

Intuitively, a link graph satisfies:
\begin{itemize}%[topsep=0pt]
\item $A^{\exists\otimes}$ iff it can be decomposed in some horizontal
  (i.e., tensorial) components, at least one of which satisfies $A$;
  similarly for $A^{\exists\circ}$ and vertical decomposition.
\item $A^{\forall\otimes}$ iff all possible horizontal decompositions
  have a component satisfying $A$; similarly for $A^{\forall\circ}$
  and vertical decomposition.
\end{itemize}

\smallskip
\noindent\textbf{Logic for Link Graphs} For instantiating the logic on
link graphs we consider the following set of constructor:
\begin{equation*}
  \Theta_{L} = \set{
    id_a : a \to a, \
    /a{:\,} \mathsf{t} \colon \set{a} \rightarrow \emptyset, \
    a/X \colon X \rightarrow a, \
    K_{\orivec{a}} \colon \emptyset \rightarrow \orivec{a}
  }
\end{equation*}
Given a signature $\langle \mathcal{K}, ar, \mathcal{E} \rangle$. The
constructor $/a{:\,} \mathsf{t}$ maps the name $a$ into an edge of
type $\mathsf{t}\in \mathcal{E}$.  $a/X$ maps all the names in $X$ to
$a$, it is rather interesting to note that the special case
$a/\emptyset$ (or simply $a$) represent the introduction of the name
$a$.  Let $K\in \mathcal{K}$ with arity $ar(K) = |\orivec{a}|$ we
introduce the constructor $K_{\orivec{a}} \colon \emptyset \rightarrow
\orivec{a}$, that links the ports of $K$ to the names in $\orivec{a}$.
	
Now, we have to extend the structural congruence $\equiv$ for covering
the new constructors by adding the following axioms:
\begin{gather*}
   a/a \equiv id_{a} \qquad
   /a{:\,} \mathsf{t} \circ a/b \equiv /b{:\,} \mathsf{t} \qquad
   /a{:\,} \mathsf{t} \circ a \equiv id_{\epsilon} \qquad
   \\
   b/(Y \uplus a) \circ (id_{Y} \otimes a/X) \equiv b/Y\uplus X \qquad
   \alpha \circ K_{\orivec{a}} \equiv K_{\alpha(\orivec{a})} \qquad
   (\text{$\alpha$ is a renaming})
\end{gather*}

To model properly the names (and hence the edges) of a link graphs we
have to introduce the fresh name quantifier $\reversen$, very similar
to the one defined in Nominal Logic.  The semantics of $\reversen$ is
defined as follows:
\begin{equation*}
  G \vDash \reversen x_{1} \ldots x_{n}.\; A \iff
  \exists  a_{1} \ldots a_{n} \notin fn(G) \cup fn(A). \;
  G \vDash A\{x_{1}/a_{1} \dots x_{n}/a_{n}\}
\end{equation*}

Using the quantification on fresh names is possible to define a
``separation'' that shares at most a specified set of names, e.g., as
the tensorial decomposition of two terms sharing the names
$\orivec{a}$:
\begin{multline*} 
  A \sepupto{\orivec{a}} B \defeq (\reversen \orivec{x}. \;
  ( a_1/\{x_1,a_1\} \otimes \dots \otimes a_n/\{x_n,a_n\}) \otimes id) \circ 
  \Big(
  \big((( x_1/a_1 \otimes \dots \otimes x_n/a_n ) \otimes id) \circ A\big)
  \otimes B
  \Big)
\end{multline*}
Notice that this predicate has been extensively used in the definition
of our sortings on the categories $\cat{Lgp}$ and $\cat{BioBg}$.

\smallskip
\noindent\textbf{Logic for Place Graphs} For instantiating the logic on
place graphs we consider the following set of constructor:
\begin{equation*}
  \Theta_{P} = \set{
    id_{n} \colon n \rightarrow n, \
    \mathbf{1} \colon 0 \rightarrow 1, \
    join\colon 2 \rightarrow 1, \
    \gamma_{1,1}\colon 2 \rightarrow 2}
\end{equation*}
The constructor $\mathbf{1}$ denotes the empty location.  $join$ maps
to location into one.  Finally, $\gamma_{1,1}$ swap the left location
with the right one.

As done in the case of link graphs, we extend the structural
congruence ($\equiv$) for the place graph case as follows:
\begin{gather*}
  join \circ (\mathbf{1} \otimes id_{1}) \equiv id_{1} \qquad \qquad
  join \circ \gamma_{1,1} \equiv join \\
  join \circ (join \otimes id_{1}) \equiv join \circ (id_{1} \otimes join)
\end{gather*}

\smallskip
\noindent\textbf{Logic for Bigraphs} It is defined by simply putting
together the constructors previously introduced, formally:
\begin{multline*}
  \Theta_{B} = \set{
    \mathbf{1}\colon \epsilon \rightarrow \langle 1,\emptyset \rangle,\,
    id_{\langle n,X\rangle} \colon
       \langle n,X\rangle \rightarrow \langle n,X\rangle,\,
    join \colon
       \langle 2,\emptyset \rangle \rightarrow \langle 1,\emptyset \rangle,
    \\
    \gamma_{1,1} \colon
       \langle 2,\emptyset \rangle \rightarrow \langle 2,\emptyset \rangle,\,
    /a{:\,}\mathsf{t} \colon \langle 0,\, \set{a}\rangle \rightarrow \epsilon,\,
    a/X \colon \langle 0,X\rangle \rightarrow \langle 0,\set{a}\rangle,\,
    K_{\orivec{a}} \colon \epsilon \rightarrow \langle 1,\orivec{a}\rangle
  }
\end{multline*}
	
The extension of the structural congruence ($\equiv$) is defined by
the ones introduce for place and link graphs, plus the following ones,
that recovers the properties of any symmetric category.
\begin{align*}
  \gamma_{I, \epsilon} \equiv id_{I} \qquad
  \gamma_{I,J} \circ \gamma_{J,I} \equiv id_{I \otimes J}  \qquad
  \gamma_{I', J'} \circ (G \otimes F) \equiv (F \otimes G) \circ \gamma_{I,J} 
\end{align*}

All the logic operators defined for the previous logics can be easily
lifted to the logic of bigraphs. (They must be ``lifted'' on the new
objects.)

\section{Sortings as BiLog formulae}\label{sec:formulae}

Sortings for bigraphs are studied in \cite{bdh:concur08}; in
particular, a general technique for defining a \emph{safe} sorting is
by means of a \emph{decomposable predicate} $P$, which specifies the
class of morphisms we want to restrict to \cite{bdh:concur08}.  A
predicate $P$ on morphisms is decomposable if $P(f \circ g)$ implies
$P(f)$ and $P(g)$.  Decomposable predicates can be conveniently
denoted by BiLog formulae of the form $(\neg \phi)^{\forall\circ}$
\cite{cms:icalp05}, where $\phi$ characterizes \emph{un}wanted
morphisms\footnote{The operator $(\cdot)^{\forall\circ}$ guarantees
  the predicate decomposition and so the resulting sorted category is
  well-defined.}.  Therefore, biological bigraphs can be characterized
by means of sortings specified by the corresponding BiLog formula.

In this section we will provide all the BiLog formulae for every
sorting used in our framework of Biological Bigraphs.

\smallskip\noindent
\ref{eq:protsol} states that a protein solution cannot have
ternary links and/or binary visible edges.
\begin{equation*}
  \ref{eq:protsol} \defeq
  \big(\neg 
  (K'_{\orivec{x}} \sepupto{a} K''_{\orivec{y}} \sepupto{a} K'''_{\orivec{z}}
  \otimes  \mathbf{T}) \big)^{\forall\circ}
  \wedge
  \big(\neg ((/a{:\,}\mathsf{v} \circ K'_{\orivec{x}} \sepupto{a} K''_{\orivec{y}}) \otimes
  \mathbf{T}) \big)^{\forall\circ}
  \label{eq:p2}
\end{equation*}

\smallskip\noindent
\ref{eq:imp} says that two nodes cannot be linked if there are more
than two occurrences of a membrane control along the path to the
common ancestor.
\begin{equation*}
 \ref{eq:imp} \defeq
 (\neg 
 (
 ( 
 \mext \circ (
 (id_{\set{a}} \lefttensoradj \mathbf{T}) 
 \mid (\mcis \circ \neg(a)) 
 ) 
 )
 \sepupto{a} \neg(a)))^{\forall\circ}.
\end{equation*}

\smallskip\noindent 
\ref{eq:polar}-\ref{eq:apolar} forbids the nesting of nodes into
locations of the same polarity:
\begin{align*}
  \ref{eq:polar} & \defeq
  \textstyle
  \Big( 
    \neg \big(
        (
	\neg\pcarry_{y} \vee \bigvee_{K \in \set{\mext}} K_{\orivec x}
        )
  	\circ \bigvee_{K \in \mathcal{P}\uplus\set{\mext}} K_{\orivec z}
    \big)
  \Big)^{\forall\circ}
  \\
  \ref{eq:apolar} & \defeq
  \textstyle
  \Big( 
    \neg \big(
        (
	\neg\pmemb_{y} \vee \bigvee_{K \in \set{\mcis}} K_{\orivec x}
        )
  	\circ \bigvee_{K \in \mathcal{A}\uplus\set{\mcis}} K_{\orivec z}
     \big)
  \Big)^{\forall\circ}
\end{align*}

\smallskip\noindent
\ref{eq:bi-layer} enforces the well-nesting of cells.
\begin{equation*}
  \ref{eq:bi-layer} \defeq
  \Big(
  \neg \big(
  \mext \circ (
  ( \mcis \mid \mcis \mid \mathbf{id}_1 )
  \vee
  \neg( \mcis \mid \mathbf{id}_1)
  )
  \big)
  \Big)^{\forall\circ}
\end{equation*}

\smallskip\noindent
\ref{eq:mob} forbids that a mobility node can be peered with a protein
node, forcing it to be linked with another mobility node of the same
type ($P$ or $F$).
\begin{align*}
  \ref{eq:mob} \defeq {} & 
  \Big( \textstyle
  \neg(
  ( \bigvee_{
  		\substack{K \in \mathcal{P}\cup\mathcal{A} \\ 
  		K' \notin\mathcal{P}\cup\mathcal{A}}
		} 
	K_{\orivec{x}} \sepupto{a} K'_{\orivec{y}} 
  )
  )
  \wedge
  \neg(
  ( \bigvee_{\substack{M \in \mathcal{M} \\ t \in \set{\mathsf{h}, \mathsf{v}} }} /z{:\,}t \circ M_{\orivec{z}} 
  )
  ) \wedge
  \neg(
  (\bigvee_{\substack{P \in \mathcal{M}_{P} \\ 
      F \in \mathcal{M}_{F} }} 
  P_{\orivec{w}} \sepupto{b} F_{\orivec{k}} 
  )
  \Big)^{\forall\circ}
  \wedge {} \\
  \textstyle
  &\Big(
  	\neg \big(
		\bigvee_{M \in \mathcal{M}} M_{\orivec{x}} \sepupto{a} M_{\orivec{x}}
	\big)
	\wedge
	\neg \big(
		\bigvee_{\mathsf{t} \in \set{\mathsf{p}, \mathsf{f}}}
			(\mathsf{t}^{\mathsf{c}}_{a} \sepupto{a} \mathsf{t}^{\mathsf{d}}_{a}) 
			\vee
			(\mathsf{t}^{\mathsf{m}}_{xa} \sepupto{a} \mathsf{t}^{\mathsf{c}}_{a})
			\vee
			(\mathsf{t}^{\mathsf{m}}_{ay} \sepupto{a} \mathsf{t}^{\mathsf{d}}_{a})
	\big)
  \Big)^{\forall\circ}
\end{align*}

\smallskip\noindent
\ref{eq:dir} disallows mobility controls $\pcarry$, $\pdir$,
$\fcell$, $\fdir$ to reside in the same location.
\begin{equation*}
  \ref{eq:dir} \defeq {}
  \big(\neg(
  join \circ (\pcarry_{x}\otimes \pdir_{y}) \parallel \pmemb_{xy}
  )
  \big)^{\forall\circ}
  \wedge
  \big(\neg(
  join \circ (\fcell_{x}\otimes \fdir_{y}) \parallel \fmemb_{xy}
  )
  \big)^{\forall\circ}
\end{equation*}

\smallskip\noindent
\ref{eq:nonest} ensures the mobility actions are not nested.
\begin{equation*}
  \ref{eq:nonest} \defeq 
  \big( \textstyle
  (\bigvee_{m\in \set{\pcarry,\pmemb,\fcell}} m_{\vec{x}}) \circ \mathbf{T} \circ 
  (\bigvee_{m\in \mathcal{M}} m_{\vec{y}})
  \big)^{\forall\circ}
\end{equation*}

\smallskip\noindent
\ref{eq:fuse} ensures well-defined fusions.
\begin{equation*}
  \ref{eq:fuse} \defeq
  \big ( \textstyle 
  \neg ( \fcell \circ (\mext \otimes \mext \otimes \mathbf{T}) \vee
         \fcell \circ (\bigvee_{K\in \mathcal{P} \cup \mathcal{A}} K_{\vec{x}}
                       \otimes \mathbf{T}))
  \big )^{\forall\circ}
\end{equation*}

\end{extended}

\end{document}